\tikzstyle{nodesty}=[circle,draw,minimum size=0.5cm,inner sep=0.5pt]
\tikzset{>={triangle 45}}
\title{Scattered one-counter languges have rank less than $\omega^2$}
\author[1]{Szabolcs~Iv\'an}
\ead{szabivan@inf.u-szeged.hu}
\address[1]{Department of Computer Science, University of Szeged, Hungary\\ }
\def\pref{{\mathbf{Pref}}}
\newtheorem{proposition}{Proposition}
\newtheorem{corollary}{Corollary}
\newtheorem{theorem}{Theorem}
\newproof{proof}{Proof}
\newtheorem{definition}{Definition}
\begin{document}
%\linenumbers
\begin{abstract}
	A linear ordering is called context-free if it is the lexicographic ordering of some context-free language
	and is called scattered if it has no dense subordering. Each scattered ordering has an associated ordinal,
	called its rank. It is known that scattered context-free (regular, resp.) orderings have rank less than $\omega^\omega$
	($\omega$, resp).
	
	A language is called a one-counter language if it can be recognized by a pushdown automaton having a singleton
	stack alphabet (essentially working as a counter that can hold a nonnegative integer and can be tested
	against zero). The class of one-counter languages lies strictly between the classes of the regular
	and the context-free languages.
	
	In this paper we confirm the conjecture of Dietrich Kuske from 2012
	that scattered one-counter languges have rank less than $\omega^2$.
\end{abstract}

\maketitle
\section{Introduction}
If an alphabet $\Sigma$ is equipped by a linear order $<$, this order can be extended to the lexicographic
ordering $<_\ell$ on $\Sigma^*$ as $u<_\ell v$ if and only if either $u$ is a proper prefix of $v$ or
$u=xay$ and $v=xbz$ for some $x,y,z\in\Sigma^*$ and letters $a<b$. So any language $L\subseteq \Sigma^*$
can be viewed as a linear ordering $(L,<_\ell)$. Since $\{a,b\}^*$ contains the dense ordering
$(aa+bb)^*ab$ and every countable linear ordering can be embedded into any countably infinite dense ordering,
every countable linear ordering is isomorphic to one of the form $(L,<_\ell)$ for some language $L\subseteq\{a,b\}^*$.

This way, countable order types can be represented by languages over some alphabet
(by a prefix-free encoding of the alphabet
by binary strings, one can restrict the alphabet to the binary one). A very natural choice is to use regular
or context-free languages as these language classes are well-studied.
A linear ordering (or an order type) is called \emph{regular} or \emph{context-free}
if it is isomorphic to the linear ordering (or, is the order type) of some language of the appropriate class.
It is known~\cite{DBLP:journals/fuin/BloomE10} that an ordinal is regular if and only if it is less than $\omega^\omega$
and is context-free if and only if it is less than $\omega^{\omega^\omega}$. Also, the Hausdorff rank~\cite{rosenstein}
of any scattered regular (context-free, resp.) ordering is less than $\omega$ ($\omega^\omega$, resp)~\cite{ITA_1980__14_2_131_0,10.1007/978-3-642-29344-3_25}.

It is known~\cite{GelleIvanTCS} that the order type of a well-ordered language generated by a prefix grammar (i.e. in which
each nonterminal generates a prefix-free language)
is computable,
thus the isomorphism problem of context-free ordinals is decidable if the ordinals in question are given as the lexicograpic
ordering of \emph{prefix} grammars.
Also, the isomorphism problem of regular orderings is decidable as well~\cite{DBLP:journals/ita/Thomas86,BLOOM200555},
even in polynomial time~\cite{LOHREY201371}.
On the other hand, it is undecidable for a context-free grammar whether it generates a dense language,
hence the isomorphism problem of context-free orderings in general is undecidable~\cite{ESIK2011107}.
It is unknown whether the isomorphism problem of scattered context-free orderings is decidable --
a partial result in this direction is that if the rank of such an ordering is at most one (that is,
the order type is a finite sum of the terms $\omega$, $-\omega$ and $1$), then the order type is
effectively computable from a context-free grammar generating the language~\cite{GelleIvanGandalf,sofsem2020}. Also,
it is also decidable whether a context-free grammar generates a scattered language of rank at most one.

It is a very plausible scenario though that the isomorphism problem of scattered context-free orderings is
undecidable in general -- the rank $1$ is quite low compared to the upper bound $\omega^\omega$ of the rank
of these orderings, and there is no known structural characterization of scattered context-free orderings.
Clearly, among the well-orderings, exactly the ordinals smaller than $\omega^{\omega^\omega}$ are context-free
but for scattered orderings the main obstacle is the lack of a finite ``normal form'' -- as every $\omega$-indexed
sum of the terms $\omega$ and $-\omega$ is scattered of rank two, and these order types are pairwise different,
there are already uncountably many scattered
orderings of rank two and thus only a really small fraction of them can possibly be context-free.

The class of the one-counter languages lies strictly between the classes of regular and context-free languages:
these are the ones that can be recognized by a pushdown automaton having only one stack symbol.
In~\cite{kuske}, a family of well-ordered languages $L_n\subseteq\{a,b,c\}^*$ was given for each integer $n\geq 0$
so that the order type of $L_n$ is $\omega^{\omega\times n}$ (thus its rank is $\omega\times n$) and
Kuske formulated two conjectures: i) the order type of well-ordered one-counter languages is strictly less
than $\omega^{\omega^2}$ and more generally, ii) the rank of scattered one-counter languages is strictly
less than $\omega^2$. Of course the second conjecture implies the first.

In this paper we prove the second conjecture of~\cite{kuske}: $\omega^2$ is a strict upper bound for the
rank of scattered one-counter languages. The contents of the paper contain new results only: instead of
reproving the results of~\cite{GelleIvanGandalf} and the subsequent, more general~\cite{sofsem2020}
(these papers already contain full proofs and examples as well to their respective results),
we push the boundaries of the knowledge of scattered context-free orderings by applying some of the
tools we developed in the earlier papers to the class of one-counter languages. It turns out that
it is enough to study restricted one-counter languages to prove the conjecture, and for this,
a crucial step is to reason about the cycles in a generalized sequential machine -- so at the end,
we can again use some graph-theoretic methods.
\section{Notation}	
	We assume the reader has some background with formal language theory and linear orderings
	(the textbooks \cite{Hopcroft+Ullman/79/Introduction,rosenstein} being excellent resources for that),
	but we list the notions we use in the paper to settle the notation.
	\subsection{Linear orderings}
	A (strict) \emph{linear ordering} is a pair $(A,<)$ with $A$ being a set, the \emph{domain} of the ordering and
	$<$ being a binary relation over $A$ which is \emph{irreflexive}: $x\not<x$, \emph{transitive}: $x<y,y<z~\Rightarrow x<z$ and \emph{total}: for any $x,y\in A$, exactly one of $x<y$, $y<x$ or $x=y$ holds. In particular,
	the empty set equipped with the empty ordering relation is also a linear ordering. To ease notation, we
	sometimes write $I=(I,<)$ and denote the linear ordering $(I,<)$ simply by its domain $I$ when there is
	no chance of confusion.
	
	The linear ordering $(A,<_A)$ is a \emph{subordering} of the linear ordering $(B,<_B)$ if $A\subseteq B$ and
	$<_A$ is the restriction of $<_B$ to $A$. The linear ordering $(A,<_A)$ can be \emph{embedded} into $(B,<_B)$
	if there is a mapping $h:A\to B$ that \emph{preserves order} (if $x<y$, then $h(x)<h(y)$) -- such mappings
	are called \emph{embeddings} of $A$ into $B$. Clearly, embeddings are injective ($h(x)=h(y)$ implies $x=y$);
	if an embedding is also onto $B$ (also said ``surjective'': for each $x\in B$ there is some $x'\in A$ with
	$h(x')=x$), then $h$ is called an $\emph{(order) isomorphism}$ between $A$ and $B$. If there exists an
	isomorphism between two linear orderings, then we call them \emph{isomorphic}, denoted $A\simeq B$.
	
	Isomorphism of linear orderings is an equivalence relation on any set of linear orderings, an \emph{order type}
	is an equivalence class of isomorphism. Of course, for any integer $n\geq 0$, the linear orderings with
	$n$-element domains are isomorphic, we denote their order type also by $n$. The order type of the nonnegative
	integers $0<1<\ldots$ is denoted $\omega$ while the order types of the integers and the rationals (equipped
	with their standard ordering relations) are denoted $\zeta$ and $\eta$ respectively. 
	The order type of a linear ordering $(A,<)$ is denoted $o(A,<)$. Since embeddability of
	linear orderings is preserved under isomorphism, this notion can be lifted to order types and so we write
	$o_1\leq o_2$ if the linear orderings of order type $o_1$ can be embedded into the linear orderings of order type
	$o_2$ and $o_1<o_2$ if $o_1\leq o_2$ but not vice versa. Note that the relation $<$ is not necessarily a linear
	ordering on a set of order types, e.g. the intervals $(0,1)$ and $[0,1]$ of real numbers can be embedded into
	each other and they have distinct order types.
	
	When $I=(I,<)$ is a linear ordering and for each $i\in I$, $A_i=(A_i,<_i)$ is a linear ordering, then the
	\emph{(generalized) sum} of the $A_i$s (with respect to $I$) is the linear ordering $\mathop\sum\limits_{i\in I}A_i$
	with domain $\mathop\bigcup\limits_{i\in I}A_i\times\{i\}$ and ordering relation $(x,i)<(y,j)$ if and only if
	$i<j$ or ($i=j$ and $x<_iy$), this ordering called the \emph{anti-lexicographic ordering} of the domain. As a special case, $(A_1,<_1)+(A_2,<_2)$ denotes $\mathop\sum\limits_{i\in\{1,2\}}A_i$.
	When in such a generalized sum all the $A_i$s are the same linear ordering $A$, we write $A\times I$ for
	$\mathop\sum\limits_{i\in I}A$.

	When $I\simeq I'$ and for each $i\in I$, $A_i\simeq A'_i$, then $\mathop\sum\limits_{i\in I}A_i~\simeq~\mathop\sum\limits_{i\in I'}A'_i$ so the sum and product operations extend naturally to
	order types, e.g. $\omega\times 2=\omega+\omega$ is the order type of the linear ordering we get by
	placing two copies of the natural numbers next to each other, while $\omega\times\omega$ is the order type
	of the set consisting of pairs of natural numbers, equipped with the anti-lexicographic order.
	
	A linear ordering $(A,<)$ is a \emph{well-ordering} if it does not contain an infinite descending chain
	$\ldots<x_3<x_2<x_1$, is \emph{quasi-dense} if the rationals, equipped with their standard ordering, can
	be embedded into $A$, and is \emph{scattered} if it is not quasi-dense. A \emph{dense} ordering is a
	linear ordering $(A,<)$ having at least two elements such that whenever $x<y$, then there exists some $z$
	with $x<z<y$. These notions are preserved under isomorphism, so they can be lifted naturally to order types,
	e.g. the finite order types and $\omega$ are well-ordered, $\zeta$ is scattered but not well-ordered,
	and $\eta$ is dense. Each dense ordering is quasi-dense but the converse does not hold, e.g. $2\times\eta$ (the
	ordering we get from the rationals by replacing each rational by two elements) is quasi-dense but not dense.
	(Note that on the other hand, $\eta\times 2=\eta$ holds).
	
	When $(A,<)$ is a linear ordering with order type $o$, then the order type of $(A,<')$ where $x<'y$ if and only
	if $y<x$, is denoted by $-o$, e.g. $-\omega$ is the order type of the negative integers.
	Note that we use here a minus sign instead of the more common notation $o^*$, the reason being to avoid
	confusion with the Kleene star operation.
	
	The order types of well-orderings are called \emph{ordinals}, e.g. $0,1,42,\omega,\omega+3,\omega\times\omega+\omega$
	are ordinals. Since any well-ordered sum or well-orderings is also well-ordered, finite products and sums of 
	ordinals are ordinals as well. Any set of ordinals is well-ordered by the relation $<$ (embeddability in one
	direction), e.g. $0<1<42<\omega<\omega+2<\omega\times\omega$. Moreover, for any set $X$ of ordinals, their
	supremum (with respect to this relation $<$) $\bigvee X$ exists and is an ordinal as well, e.g.
	$\bigvee\{0,1,\ldots\}=\omega$. Each ordinal $\alpha$ is either a \emph{successor ordinal}, in which case
	$\alpha=\beta+1$ for some ordinal $\beta$, or is a \emph{limit ordinal}, in which case $\alpha=\mathop\bigvee\limits_{\beta<\alpha}\beta$. For example, $42$, $\omega+2$ and $\omega\times\omega+3$
	are successor ordinals while $0$, $\omega$ and $\omega\times\omega+\omega$ are limit ordinals. 
	
	As any set of ordinals is well-ordered by $<$, one can use transfinite induction on them, by showing that
	if a property $P$ holds	for some ordinal $\alpha$ then it also holds for $\alpha+1$, and whenever $\alpha$
	is a limit ordinal and $P$ holds for each $\beta<\alpha$, then $P$ holds for $\alpha$ as well, that proves
	$P$ holds for all the ordinals. Frequently, the case when $\alpha=0$ is treated separately.
	
	On ordinals, not only sums and products but exponentiation is also defined: when $\alpha$ and $\beta$
	are ordinals, then the ordinal $\alpha^\beta$ is defined as
	\begin{itemize}
		\item $1$ if $\beta=0$,
		\item $\alpha^\gamma\times\alpha$ if $\beta=\gamma+1$,
		\item $\mathop\bigvee\limits_{\gamma<\beta}\alpha^\gamma$ if $\beta$ is a nonzero limit ordinal.
	\end{itemize}
  The exponentiation notation is ``right-associative'', i.e. $\alpha^{\beta^\gamma}$ stands for $\alpha^{(\beta^\gamma)}$. We omit the parentheses that are redundant applying that $+$ and $\times$ are associative,
  and using the convention that exponentiation takes precedence over product, which in turn takes precedence over sum.
	
 Hausdorff associated an ordinal rank to each scattered ordering (see e.g.~\cite{rosenstein}), but we use a slightly modified variant (not affecting the main result as this variant differs
	from the original one by at most one) introduced in~\cite{10.1007/978-3-642-29344-3_25} as follows. For each ordinal
	$\alpha$ we define a class $H_\alpha$ of linear orderings:
	\begin{itemize}
		\item $H_0$ contains all the finite linear orderings;
		\item $H_\alpha$ for $\alpha>0$ is the least class of linear orderings closed under finite sum and isomorphism
		which contains all the sums of the form $\mathop\sum\limits_{i\in\zeta}A_i$, where for each integer $i$, the
		linear ordering $A_i$ belongs to $H_{\beta_i}$ for some ordinal $\beta_i<\alpha$.
	\end{itemize}
	By Hausdorff's theorem, a countable linear ordering $A$ is scattered if and only if some class $H_\alpha$ contains it:
	the least such $\alpha$ is called the \emph{rank} of the ordering (or of the order type as the value factors through isomorphism) and is denoted $\mathrm{rank}(A)$ (or $\mathrm{rank}(o)$ for the order type $o=o(A)$.)
	
	We note here that the original definition of Hausdorff includes only the empty ordering and the singletons into $H_0$
	and does not require the classes $H_\alpha$ to be closed under finite sum. Since a finite sum of orderings can always be
	written as a zeta-sum of the same orderings and infinitely many zeros, and a zeta-sum of finite linear orderings is also a
	zeta-sum of empty and singleton orderings, this slight change can introduce only a difference of one between the rank,
	e.g. $\omega+\omega$ has rank one in our rank notion but has rank two in the original one. Since $\alpha<o$ for a limit ordinal
	$o$ and an ordinal $\alpha$ if and only if $\alpha+1<o$, and $o=\omega^2$ is a limit ordinal, the main theorem holds for the
	original notion of rank as well.
	
	The reader is encouraged to verify that for any (countable) ordinal $\alpha$,
	the rank of $\omega^\alpha$ is $\alpha$.
	\subsection{Formal languages}
	An \emph{alphabet} is a finite nonempty set of symbols, which are also called \emph{letters} of the alphabet.
    We assume each alphabet comes with a fixed total ordering on its letters.
    For a nonempty set $\Sigma$, $\Sigma^*$ denotes the free monoid freely generated by $\Sigma$,
    that is, the set of (finite) words $a_1\ldots a_n$, $n\geq 0$, $a_i\in\Sigma$ over $\Sigma$. The length
    of a word $w=a_1\ldots a_n$ is $|w|=n$. For $n=0$, we get the empty word which is denoted by $\varepsilon$.
    (We assume the symbol $\varepsilon$ itself is not an element of any alphabet). In this monoid, the product
    operation is $a_1\ldots a_n\cdot b_1\ldots b_k=a_1\ldots a_nb_1\ldots b_k$ and the symbol $\cdot$ is often
    discarded when it does not ruin readability.
    When $u\in\Sigma^*$ is a word and $a\in\Sigma$ is a letter, then $|u|_a$ denotes the number of occurrences
    of $a$ in $u$, formally $|\varepsilon|_a=0$, $|ua|_a=|u|_a+1$ and $|ub|_a=|u|_a$ for each $b\neq a$.
    A \emph{language} over the alphabet $\Sigma$ is any subset
    $L$ of $\Sigma^*$. Product of two languages $K$ and $L$ is defined as $K\cdot L=\{uv:u\in K,v\in L\}$,
    again by omitting the symbol $\cdot$ when there is no danger of confusion.
    
    When $\Sigma$ is a totally ordered set, we use two partial orderings on $\Sigma^*$: the \emph{prefix ordering} $\leq_p$
    ($u\leq_p v$ if and only if $v=uu'$ for some $u'\in\Sigma^*$), with $<_p$ denoting the strict variant of $\leq_p$,
    and the \emph{strict ordering} $<s$ ($u<_sv$ if and only if $u=u_1au_2$ and $v=u_1bu_3$ for some words $u_1,u_2,u_3\in\Sigma^*$ and
    letters $a<b$). Their union is the \emph{lexicographic ordering} $\leq_\ell$ of $\Sigma^*$ which is a total ordering
    and whose strict variant is denoted $<_\ell$. This way, each language $L\subseteq\Sigma^*$ can be viewed as a linear ordering
    set $(L,\leq_\ell)$; let $o(L)$ denote the order type of the language $L$. As an example, for the binary alphabet $\{0,1\}$
    with $0<1$ we have $o(0^*)$ is $\omega$, $o(0^*1)$ is $-\omega$
    as $\ldots<_{\ell} 001<_{\ell} 01<_{\ell} 1$, and $o((00+11)^*01)=\eta$.
    We say that the language $L$ is scattered, well-ordered, etc. if so is the linear ordering $(L,<_\ell)$.

	For a language $L\subseteq \Sigma^*$, we let $\pref(L)$ stand for the set $\{u\in\Sigma^*:~u\leq_pv\hbox{ for some }v\in L\}$ of
	the prefices of the members of $L$. Similarly, let $\mathbf{Suf}(L)$ stand for the set of the suffices of the members of $L$
	(which is formally the reversal of the prefix language of the reversal of $L$, say).
	When $L$ is a language and $n\geq 0$ is an integer, then $L^n$ is the language defined inductively
	as $L^0=\{\varepsilon\}$ and $L^{n+1}=L^nL$, and $L^*$ ($L^+$, resp.) denotes the language
	$\mathop\bigcup\limits_{n\geq 0}L^n$ ($\mathop\bigcup\limits_{n\geq 1}L^n$, resp.) When $L=\{u\}$ is a singleton
    language, we might use $u^*$ and $uK$ for $\{u\}^*$ and $\{u\}K$. The class of \emph{regular} languages
    over some alphabet $\Sigma$ is the least class which contains the empty language $\emptyset$, all the 
    singleton languages $\{a\}$ with $a\in\Sigma$ and which is closed under finite union, product and star.
    When $u\in\Sigma^*$ is a word and $L\subseteq\Sigma^*$ is a language, then $u^{-1}L$ denotes the
    language $\{v\in\Sigma^*:uv\in L\}$. It is known that if $L$ is regular, then so is $u^{-1}L$ for any word $u$.
	
	For each word $u$ there is a shortest prefix $v$ of $u$ so that $u\in v^*$, this word $v$ is called the \emph{primitive root}
	$\mathrm{root}(u)$ of $u$. The word $u$ is called \emph{primitive} if $u=\mathrm{root}(u)$.
	
	An \emph{$\omega$-word} over $\Sigma$ is a sequence $a_1a_2\ldots$ of letters. We let $\Sigma^\omega$
	denote the set of all $\omega$-words. Then, $\Sigma^\omega$ is linearly ordered by the (appropriate modification of the) relation $<_s$ and $\Sigma^*\cup\Sigma^\omega$ is linearly ordered by $<_\ell$. Of course we can
	define the product $u\cdot v\in\Sigma^\omega$ with $u\in\Sigma^*$ and $v\in\Sigma^\omega$ as expected,
	as well as the word $u^\omega=uu\ldots\in\Sigma^\omega$ for each $u\in\Sigma^+$.
	
	\subsection{Transducers and (restricted) one-counter languages}

	Let $D_1\subseteq\{0,1\}^*$ be the language of proper bracketings where
	$0$ plays the role of the opening bracket while $1$ plays
	the closing bracket. That is, a word $u\in\{0,1\}^*$ belongs to $D_1$ if and only if $|u|_0=|u|_1$ and for each
	prefix $v$ of $u$, $|v|_0\geq |v|_1$.
	
	A (nondeterministic) \emph{regular transducer} for the purposes of this paper is a tuple $M=(Q,\Sigma,\Delta,q_0,F,\mu)$
	where $Q$ is the finite set of states, $q_0\in Q$ is the initial state, $F\subseteq Q$ is the set of final states,
	$\Sigma$ is the \emph{output} alphabet, $\Delta\subseteq Q\times \{0,1\}\times Q$ is the transition relation and for each
	$(p,a,q)\in\Delta$, $\mu(p,a,q)$, also denoted $R_{p,a,q}$ is a nonempty regular language over $\Sigma$.
	
	For each word $w\in\{0,1\}^*$ and states $p,q\in Q$ we associate a (regular) language
	$L(M,w,p,q)$ inductively as follows:
	first, let $L(M,\varepsilon,p,q)=\begin{cases}\varepsilon&\hbox{if }p=q\\\emptyset&\hbox{otherwise.}\end{cases}$
	
	Then, for each nonempty word $w=ua$, let
	$L(M,ua,p,q)=\mathop\bigcup\limits_{(r,a,q)\in\Delta}L(M,u,p,r)\cdot R_{r,a,q}$.
	We define $L(M,w)=\mathop\bigcup\limits_{q\in F}L(M,w,q_0,q)$ and
	$L(M)=\mathop\bigcup\limits_{u\in D_1}L(M,u)$. Observe that we only allow the binary alphabet
	as input, moreover, the transducer is by definition only applied to the language $D_1$ of proper
	bracketings -- we make these restrictions to ease notation and to maintain readability of the paper.
	
	A language $L\subseteq\Sigma^*$ is called a \emph{restricted one-counter language} if $L=L(M)$ for some regular transducer $M$.
	As an example, consider the transducer given on Figure~\ref{fig-trans-cban},
	with $q_0$ being its initial and $q_f$ being its only final state. Clearly, only words of the form $w=0^*1^+$ can have a nonempty
	image $L(M,w)$ under $M$, so as $0^*1^+~\cap~D_1=\{0^n1^n:n\geq 1\}$, $L(M)=\mathop\bigcup\limits_{n\geq 1}L(M,0^n1^n)=\mathop\bigcup\limits_{n\geq 1}
	c^n(b^*a)^n$, so this language $L=L(M)$ is a restricted one-counter language. In~\cite{kuske} it has been shown that $o(L)=\omega^\omega$
	and $o(L^k)=\omega^{\omega\times k}$. In particular, for each $k\geq 0$, $L^k$ is a scattered language
	of rank $\omega\times k$. (Note that $L^*$ is not scattered by e.g. Proposition~\ref{prop-iterate-vstar}
	so $L^*$ is \emph{not} an example of a scattered language of rank $\omega^2$, though it's a one-counter
	language, see below.)
	\begin{center}\begin{figure}[h]
		\[\begin{tikzpicture}
		\node[draw,circle] (0) at (0,0) {$q_0$};
		\node[draw,circle] (1) at (3,0) {$q_f$};
		\draw[->] (0) to node[above] {$1~/~b^*a$} (1);
		\draw[->,loop] (0) to node[above] {$0~/~c$} (0);
		\draw[->,loop] (1) to node[above] {$1~/~b^*a$} (0);
		\end{tikzpicture}\]
		\caption{Transducer for $c^n(b^*a)^n$}
		\label{fig-trans-cban}
	\end{figure}
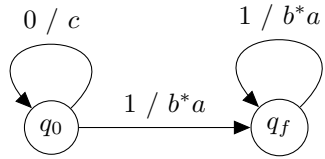
	\end{center}
	A \emph{one-counter language} is usually defined via the means of pushdown automata operating with a single stack symbol.
	The characterization from~\cite{Berstel79transductionsand}, see also
	\cite{DBLP:journals/jcss/Latteux83} suits our purposes better:
	the class of one-counter languages is the least language class
	which contains the restricted one-counter languages and is closed under concatenation, union and Kleene iteration.
\subsection{Linear and semilinear sets}
	Let $\mathbb{N}_0$ stand for the set of nonnegative integers.
	We call a set $X\subseteq\mathbb{N}_0^k$ \emph{periodic} if it has the form
	$X=\{N+M\cdot t:t\geq 0\}$ for some vectors $N,M\in\mathbb{N}_0^k$;
	\emph{linear} if it has the form $X=\{N_0+N_1\cdot t_1+N_2\cdot t_2+\ldots+N_n\cdot t_n:
	t_1,\ldots,t_n\geq 0\}$ for some integer $n\geq 0$ and vectors $N_0,\ldots,N_n\in\mathbb{N}_0^k$;
	\emph{semilinear} if it is a finite union of linear sets
	and \emph{ultimately periodic} if it is a finite union of periodic sets.
	(Observe that a singleton set is also periodic, by choosing the vector $M$
	in the definition to be the null vector, thus finite sets are ultimately periodic.)
	
	It is known~\cite{Matos94periodicsets} that a subset of $\mathbb{N}_0$
	is ultimately periodic if and only if it is semilinear.
	Moreover, by Parikh's theorem we know that the Parikh image $\Psi(L)=\{(|u|_0,|u|_1):u\in L\}$ of any context-free (thus, any regular) language $L\subseteq\{0,1\}^*$ is semilinear (the theorem holds
	for arbitrary alphabets).
	
\section{Some order-theoretic properties of scattered languages and operations}
	In this section we list several statements connecting the rank of scattered languages with language-theoretic
	operations.
	
	The reason why we use the modified rank variant instead of the original one is the following couple
     of handy statements:
\begin{proposition}[\cite{10.1007/978-3-642-29344-3_25}]
	\label{prop-rank-ops}
	Some useful properties of the version of the Hausdorff rank that we use that hold for scattered languages $K$ and $L$:
	\begin{itemize}
		\item $\mathrm{rank}(L)=\mathrm{rank}(\pref(L))$ (in particular, $\mathbf{Pref}(L)$ is also scattered whenever
		$L$ is)
		\item $\mathrm{rank}(K\cup L)=\mathrm{max}\bigl(\mathrm{rank}(K),\mathrm{rank}(L)\bigr)$
		\item $\mathrm{rank}(KL)\leq \mathrm{rank}(L)+\mathrm{rank}(K)$
		\item more generally, if $K$ is scattered of rank $\alpha$ and for each $w\in K$, $L_w$ is a scattered language with rank at most $\beta$,
		then $\mathop\bigcup\limits_{w\in K}wL_w$ is scattered of rank at most $\beta+\alpha$.
	\end{itemize}
\end{proposition}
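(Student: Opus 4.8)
The plan is to derive all four items from a few closure properties of the classes $H_\alpha$. I would first record two easy facts, each by induction on $\alpha$: \emph{(a)} every $H_\alpha$ is closed under taking subsets, so a subordering of a scattered ordering is again scattered and of no larger rank; and \emph{(b)} for $\alpha\geq 1$, a linear ordering lies in $H_\alpha$ iff it is isomorphic to a sum $\sum_{i\in I}X_i$ with $I\in H_1$ and each $X_i$ of rank $<\alpha$ (one direction unfolds the definition; the other uses that a finite sum of orderings of rank $<\alpha$ has rank $<\alpha$, and that $H_1$ is closed under finite sums). Two consequences I would isolate: a \emph{finite blow-up} of a scattered ordering (replacing each point by a non-empty finite ordering) has the same rank; and every $X\in H_1$ embeds into $\zeta\cdot k$ for some finite $k$. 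Finally, the third item is the special case $L_w:=L$ of the fourth, since $o(wL)=o(L)$ for a fixed word $w$ (the map $v\mapsto wv$ being an order isomorphism onto $wL$); so it suffices to handle items one, two and four.

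For item two, $\mathrm{rank}(K\cup L)\geq\max(\mathrm{rank}(K),\mathrm{rank}(L))$ follows from \emph{(a)}, and scatteredness is inherited since a dense subordering of $K\cup L$ restricts to one of $K$ or of $L$. For the reverse inequality, set $\alpha=\max(\mathrm{rank}(K),\mathrm{rank}(L))$, take normal forms of $K,L$ from \emph{(b)} and extend them to convex partitions of $K\cup L$ indexed by $H_1$-orderings (insert the maximal gap intervals lying outside $K$, resp. outside $L$); their common refinement chops $K\cup L$ into convex blocks, each a subordering of a $K$-block or an $L$-block and hence of rank $<\alpha$ by \emph{(a)}, indexed by a chain which is monotone in both coordinates of the product of these two $H_1$-orderings and is therefore itself in $H_1$ (embed the factors into copies of $\zeta\cdot k$ and sum coordinates). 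By \emph{(b)}, $K\cup L\in H_\alpha$.

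Item one: $L\subseteq\pref(L)$ and \emph{(a)} give $\mathrm{rank}(L)\leq\mathrm{rank}(\pref(L))$ once $\pref(L)$ is known scattered; for the reverse it suffices, by item two, to bound $\mathrm{rank}(\pref(L)\setminus L)$ by $\mathrm{rank}(L)$. Each $u\in\pref(L)\setminus L$ sits immediately to the left of the non-empty convex set $I_u=L\cap u\Sigma^*\subseteq L$, and the $I_u$ form a laminar family ($I_{u'}\subseteq I_u$ when $u\leq_p u'$; disjoint and ordered like $u,u'$ when the two are $\leq_p$-incomparable). Following the left-most descending branch of the tree $\pref(L)$ out of each $u$ either reaches a word of $L$ — and the resulting map $\pref(L)\setminus L\to L$ has finite fibres, exhibiting that part as a finite blow-up of a subordering of $L$ — or converges to an infinite limit word, and those remaining points are treated analogously; combining with item two yields $\mathrm{rank}(\pref(L)\setminus L)\leq\mathrm{rank}(L)$. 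The surviving bookkeeping here is exactly what is carried out in~\cite{10.1007/978-3-642-29344-3_25}.

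Item four: when $K$ is prefix-free, the sets $w\Sigma^*$ $(w\in K)$ are pairwise disjoint and ordered among themselves exactly as $K$ is, so $\bigcup_{w\in K}wL_w$ is literally the sum $\sum_{w\in K}wL_w$ of convex blocks of rank $\leq\beta$; an induction on $\mathrm{rank}(K)$ now succeeds — the base case is a finite sum, and for the step one substitutes the normal form $K=\sum_{i\in I}K_i$ into $\sum_{w\in K}wL_w=\sum_{i\in I}\bigl(\sum_{w\in K_i}wL_w\bigr)$, noting that each $K_i$ is prefix-free of strictly smaller rank, applies the induction hypothesis to the inner sums (rank $\leq\beta+\mathrm{rank}(K_i)$), and invokes \emph{(b)}, landing in $H_{\beta+\alpha}$. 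For general $K$ the natural move is to group words of $K$ by their $\leq_p$-minimal prefix $w_0$ in $K$, obtaining $\bigcup_{w\in K}wL_w=\sum_{w_0}w_0L'_{w_0}$ with the $w_0$ ranging over a prefix-free set and $L'_{w_0}=\bigcup_{u:\,w_0u\in K}uL_{w_0u}$. The obstacle — and the one step I expect to require genuine work — is that $L'_{w_0}$ is itself a generalised sum indexed by the language $w_0^{-1}K$, which may have rank as large as $\mathrm{rank}(K)$, so it need not have rank $\leq\beta$ and the prefix-free case does not apply directly; one must instead analyse how the blocks $wL_w$ interleave along the $\leq_p$-chains of $K$ (which, being chains of prefixes, have rank at most $1$). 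This last analysis is the heart of the argument in~\cite{10.1007/978-3-642-29344-3_25}; with the reductions above it is the only point left to establish.
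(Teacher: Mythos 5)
The paper does not actually prove this proposition: it is quoted verbatim from~\cite{10.1007/978-3-642-29344-3_25} and used as a black box, so there is no in-paper argument to measure your attempt against. Judged on its own terms, your outline has a sensible architecture (closure of the classes $H_\alpha$ under suborderings, the normal-form characterisation \emph{(b)}, reduction of the third item to the fourth), but it is not a proof: you explicitly defer the two hardest steps back to the cited reference. For item four, the general (non-prefix-free) case is precisely the form in which the statement is used later (Proposition~\ref{prop-ldelda-kicsi} applies it with $K=L(\delta')$, which has no reason to be prefix-free), and your reduction via $\leq_p$-minimal prefixes fails for exactly the reason you name: the inner languages $L'_{w_0}$ are generalised sums over $w_0^{-1}K$ and need not have rank $\leq\beta$. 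Writing "this is the heart of the argument in the reference" concedes the crucial point rather than establishing it, and the same applies to the "surviving bookkeeping" you wave at in item one (in particular the treatment of prefixes whose leftmost branch converges to an $\omega$-word, where there is no element of $L$ to map to and one must separately bound the order type of the set of such limit branches).

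There is also a concrete error in item two. A block of the common refinement of your two convex partitions is the intersection of the hull of a $K$-block with the hull of an $L$-block; in general it contains points of both languages and is a union of a subordering of a $K$-block and a subordering of an $L$-block, interleaved arbitrarily. It is therefore \emph{not} a subordering of a single block, and fact \emph{(a)} does not give rank $<\alpha$; to conclude that, you need the very inequality being proved, applied at smaller rank. The remedy is to run the whole of item two as an induction on $\alpha=\max\bigl(\mathrm{rank}(K),\mathrm{rank}(L)\bigr)$ and invoke the induction hypothesis on each mixed block; that closes the argument, but as written the step is wrong.
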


We make heavy use of the following simple propositions later:
	\begin{proposition}
		\label{prop-iterate-vstar}
		Assume $L\subseteq\Sigma^*$ is a language such that $L^+$ is scattered. Then $L\subseteq v^*$ for some word $v\in\Sigma^*$ (and consequently, so is $L^+$).
	\end{proposition}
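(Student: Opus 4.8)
The plan is to prove the contrapositive: assuming $L\not\subseteq v^*$ for every $v\in\Sigma^*$, I will exhibit a dense suborder of $(L^+,<_\ell)$, so that $L^+$ is not scattered. The first step is a combinatorics-on-words reduction. Suppose every two words of $L$ commuted, i.e.\ $\alpha\beta=\beta\alpha$ for all $\alpha,\beta\in L$. Fix any nonempty $u_0\in L$ and let $v=\mathrm{root}(u_0)$. Then every nonempty $w\in L$ commutes with $u_0$, so by the classical Lyndon--Sch\"utzenberger lemma $w$ and $u_0$ are both powers of a single word, whose primitive root is necessarily $v$; hence $w\in v^*$, and since $\varepsilon\in v^*$ as well this would force $L\subseteq v^*$, a contradiction. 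So there are nonempty words $u,w\in L$ with $uw\neq wu$.

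Since $|uw|=|wu|$ and $uw\neq wu$, neither of $uw,wu$ is a prefix of the other, and for words of equal length $<_\ell$ coincides with $<_s$; hence $uw$ and $wu$ are $<_s$-comparable, and I label them $x,y$ so that $x<_s y$. Note that $x,y\in\{u,w\}^+$, hence $\{x,y\}^+\subseteq\{u,w\}^+\subseteq L^+$. Now consider the monoid homomorphism $h\colon\{0,1\}^*\to\Sigma^*$ determined by $h(0)=x$ and $h(1)=y$. I would check that $h$ is an order-embedding of $(\{0,1\}^*,<_\ell)$ into $(\Sigma^*,<_\ell)$: if $\alpha<_p\beta$ then $h(\alpha)<_p h(\beta)$ because $x$ and $y$ are nonempty; and if $\alpha=\mu 0\alpha'$ and $\beta=\mu 1\beta'$ witness $\alpha<_s\beta$, then $h(\alpha)=h(\mu)\,x\,h(\alpha')$ and $h(\beta)=h(\mu)\,y\,h(\beta')$ share the prefix $h(\mu)$, and since $x<_s y$ the first mismatch between $x\,h(\alpha')$ and $y\,h(\beta')$ already falls inside the leading $x$ versus $y$ and favors the former, so $h(\alpha)<_s h(\beta)$. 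A strictly monotone map into a linear order is injective, so $h$ is an order-embedding onto its image.

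Finally, the language $(00+11)^*01$ is a sublanguage of $\{0,1\}^*$ whose order type is $\eta$; applying $h$ to it yields the language $(xx+yy)^*xy\subseteq\{x,y\}^+\subseteq L^+$, which by the previous paragraph is order-isomorphic to it and hence also has order type $\eta$. Thus $L^+$ is quasi-dense, contradicting the hypothesis, and the proposition follows. The only genuine work is the first step — the Lyndon--Sch\"utzenberger argument turning ``$L$ lies in no $v^*$'' into an honestly non-commuting pair — together with the routine but slightly fiddly verification that $h$ respects both the prefix ordering and the strict ordering; once $x<_s y$ with $|x|=|y|$ is in hand, the rest is immediate.
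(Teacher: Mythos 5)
Your proof is correct and follows essentially the same route as the paper: both extract a non-commuting pair $u,w\in L$ via Lyndon's theorem, order $x=uw$ and $y=wu$ by $<_s$, and exhibit the dense sublanguage $(xx+yy)^*xy\subseteq L^+$ (the paper writes it as $\{uwuw,wuwu\}^*uwwu$). You merely spell out in more detail the two steps the paper leaves implicit, namely the reduction from ``no common root'' to a non-commuting pair and the verification that substituting $x,y$ for $0,1$ is an order embedding.
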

	\begin{proof}
		Assume $u,v\in L$ are nonempty words with $\mathrm{root}(u)\neq\mathrm{root}(v)$.
		Then, by Lyndon's theorem (see e.g.~\cite{10.5555/267846}, Theorem 2.2), $uv\neq vu$, say $uv<_svu$ (having the same length, they cannot be in the $<_p$ relation,
		so it's either $uv<_svu$ or the other way around). Then the language $\{uvuv,vuvu\}^*uvvu$ forms a dense subset in $L^+$.
		Thus, if $L^+$ is scattered, then the nonempty members of $L$ share a common primitive root $v$, and hence $L\subseteq v^*$.
	\end{proof}
Languages having a specific form will play crucial role in our proofs:
\begin{definition}
A \emph{prefix chain} is a language $L$ whose words are linearly ordered by the relation $<_p$.
A language $L$ \emph{prefix free} if its words are pairwise incomparable with respect to the relation $<_p$
(and consequently, if and only if it is linearly ordered by the relation $<_s$). 
\end{definition}
Observe that finite prefix chains have finite order types and thus have rank $0$, while infinite prefix
chains have order type $\omega$ and thus have rank $1$. Also, for each infinite prefix chain $C$ there exists
a unique $\omega$-word $w_C$ such that $C\subseteq\mathbf{Pref}(w_C)$. We call $w_C$ the \emph{limit} of $C$.
	\begin{proposition}
		\label{prop-dense-language-has-a-prefixfree-sublanguage}
		If $L\subseteq\Sigma^*$ is a dense language,
		then it has a prefix-free dense subset $K\subseteq L$.
	\end{proposition}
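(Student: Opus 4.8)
The plan is to prove the formally stronger statement that \emph{every non-scattered language $L\subseteq\Sigma^*$ contains a prefix-free sublanguage $K$ with $o(K)=\eta$}; this yields the proposition, since a dense language is in particular non-scattered (it contains a copy of $\eta$). I would use two elementary facts: (a) a finite union of scattered linear orderings is scattered (part of Proposition~\ref{prop-rank-ops}); and (b) every non-scattered linear ordering $X$ has an element $v$ for which both $\{x\in X:x<v\}$ and $\{x\in X:v<x\}$ are non-scattered --- just take $v$ to be the image of $0$ under some embedding of the rationals into $X$. Call a word $p\in\Sigma^*$ \emph{fat} if $L_p:=\{u\in L:p\leq_p u\}$, ordered by $<_\ell$, is non-scattered; thus $\varepsilon$ is fat.

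I would build $K$ as the union of an increasing chain $K_0\subseteq K_1\subseteq\cdots$ of finite prefix-free subsets of $L$. Listing the elements of $K_n$ as $v_1<_\ell\cdots<_\ell v_m$, they determine $m+1$ ``gaps'' $(-\infty,v_1),(v_1,v_2),\dots,(v_m,+\infty)$ of $(L,<_\ell)$, and the invariant is that each gap $G$ carries a \emph{fat handle} $h_G$ with (I1)~$L_{h_G}\subseteq G$ and (I2)~$h_G$ is $\leq_p$-incomparable to every element of $K_n$. Start with $K_0=\emptyset$ and the single handle $\varepsilon$. To pass from $K_n$ to $K_{n+1}$, pick a gap $G$ --- following a schedule that eventually picks every gap --- with handle $h$, and use (b) to choose $v\in L_h$ such that $L_h^{<v}:=\{u\in L_h:u<_\ell v\}$ and $L_h^{>v}:=\{u\in L_h:v<_\ell u\}$ are both non-scattered. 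Then $v$ properly extends $h$ (otherwise $L_h^{<v}=\emptyset$), and $v$ lies in $G$ by (I1); since $v$ extends the handle $h$, which is incomparable to $K_n$, so is $v$, hence $K_{n+1}:=K_n\cup\{v\}$ is prefix-free. The gap $G=(v_i,v_{i+1})$ splits into $(v_i,v)$ and $(v,v_{i+1})$, and handles must be supplied for these. Here is the key point: $L_h^{<v}$ is the union of the finitely many members of $L$ that extend $h$ and are proper prefixes of $v$, together with the finitely many subtree-languages $L_{wc}$ for which $h\leq_p w<_p v$ and $c$ is a letter smaller than the letter of $v$ in position $|w|+1$; by (a), one of these $L_{wc}$ must be non-scattered, and then $h^L:=wc$ is fat, extends $h$ (so it is $\leq_p$-incomparable to $K_n$), is $\leq_p$-incomparable to $v$ (it diverges from $v$ at position $|w|+1$), and satisfies $L_{h^L}\subseteq L_h^{<v}\subseteq(v_i,v)$; thus $h^L$ is a legal handle for $(v_i,v)$, and the handle $h^R$ for $(v,v_{i+1})$ is produced symmetrically from $L_h^{>v}$. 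The handles of all other gaps are kept; (I2) survives for them because $v$ extends $h$, whose subtree is --- by the old (I1) --- disjoint from the other gaps, so $v$ cannot be $\leq_p$-comparable to any of those handles.

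Finally, $K:=\bigcup_nK_n$ is prefix-free as an increasing union of prefix-free sets, it is countably infinite, and it is a dense ordering without endpoints: density holds because the schedule eventually inserts a point strictly between any two elements consecutive at some finite stage (a gap persists until it is chosen), and no endpoints survive because the two extreme gaps are expanded as well; hence $o(K)=\eta$ and $K$ is dense. The main obstacle --- and the reason the naive ``embed the rationals into $(L,<_\ell)$'' does not suffice --- is exactly prefix-freeness: it forces the inserted points to be extensions of carefully maintained handles, and the crux is the finite-union decomposition of $L_h^{<v}$ (and of $L_h^{>v}$), which guarantees that a fat handle incomparable to everything chosen so far always exists for each new sub-gap.
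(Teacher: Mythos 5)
Your proof is correct, but it takes a genuinely different route from the one in the paper. The paper splits into two cases according to whether the set $P$ of words of $L$ lying on infinite $<_p$-chains is dense: if not, it passes to an interval of $L$ disjoint from $P$, where all prefix chains are finite, and takes the $<_p$-maximal elements; if so, it grows a ternary tree inside $P$ using the observation that every element of $P$ lies on infinitely many infinite prefix chains, and reads off the prefix-free dense set from the middle branches. You instead run a single uniform back-and-forth construction, maintaining for each gap a ``fat'' handle $h$ (one with $L_h$ non-scattered) that is $\leq_p$-incomparable to everything chosen so far; the crux is your finite decomposition of $L_h^{<v}$ into the proper prefixes of $v$ in $L$ plus finitely many subtrees $L_{wc}$ diverging from $v$, so that non-scatteredness of a finite union (the union clause of Proposition~\ref{prop-rank-ops}) forces one divergent subtree to stay fat. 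Your argument buys two things the paper's does not: it works under the weaker hypothesis that $L$ is merely non-scattered (so in Proposition~\ref{prop-sub-of-scattered-is-scattered} one could apply it directly to $K$ without first extracting a dense subset $X$), and it delivers a prefix-free subset of order type exactly $\eta$ rather than just a dense one; it also avoids the paper's case analysis. What the paper's proof buys in exchange is brevity: each of its two cases is a short, essentially static argument, whereas yours requires carrying the handle invariants and the gap-scheduling bookkeeping through an inductive construction.
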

	\begin{proof}
	Let $P\subseteq L$ be the language containing all the words which are members of some infinite prefix chain of
	$L$, that is, $P=\mathop\bigcup\limits_{C\subseteq L\hbox{ is an infinite prefix chain}}C$. Now we have two cases:
	
	{\textbf{Case 1.}} If $P$ is not dense, then there exist two elements $u,v\in P$ such that $u<_\ell v$ but there is no $w\in P$
	with $u<_\ell w<_\ell v$. Then, the sublanguage $L'=\{x\in L ~:~ u <_\ell x <_\ell v \}$ of $L$ is still dense
	and has no member in $P$. In $L'$ there can be elements which are in the prefix relation, but all the
	$<_p$-chains are finite within $L'$ (since if $L'$ contained an infinite $<_p$ chain, its elements would be in $P$). So let $K\subseteq L'$ be the language containing the $<_p$-maximal elements of $L'$. Of course, $K$ is prefix-free.
	We now show that $K$ is dense: let $u_1$, $u_2$ be members of $K$ with $u_1<_\ell u_2$. Since $K$ is prefix-free,
	it has to be the case that $u_1<_su_2$. Now let $u'_2$ be the shortest prefix of $u_2$ with $u_1<_su'_2\leq_p u_2$
	and $u'_2\in L'$. Since $L'$ itself is dense and $u_1,u'_2$ belong to $L'$, there is some word $w\in L'$ with
	$u_1<_\ell w<_\ell u'_2$. With $u_1$ being a $<_p$-maximal element of $L'$, it has to be the case $u_1<_sw$
	and as $w<_pu'_2$ would contradict the minimality of $u'_2$, it also has to be $w<_su'_2$. Hence, $u_1<_sw<_su_2$
	as well and by the assumption of $P$, there has to be a $<_p$-maximal element $w'$ of $L'$ with $w<_p L'$
	(otherwise there would be an infinite prefix chain present in $P$). Hence this $w'$ also belongs to $K$ and so
	$K$ is dense.
	
	{\textbf{Case 2.}} If $P$ is dense, we define a word $x_u\in P$ inductively for each word $u\in \{0,2\}^*\{\varepsilon,1\}$ such that
	$u <_p v$ implies $x_u <_p x_v$ and $u <_s v$ implies $x_u <_s x_v$. This way we embed the dense
	language $\{0,2\}^*\{1\}$ into $P$, proving the statement.
	
	First observe that for each $x\in P$, there has to be an infinite number of $\omega$-words $w$
	such that $x\in\pref(w)$ and $\pref(w)\cap P$ is infinite (that is, there have to be infinitely many
	different prefix chains containing $w$ within $L$), for if there were some $x\in P$ with only a finite number
	of such $\omega$-words, say $\{w_1,\ldots,w_k\}$,
	then by choosing one of them, say $w_1$, there would be a length $N$ such that
	if $u\in\pref(w_1)$ with $|u|\geq N$, then $u\notin\pref(w_i)$ for $i>1$ (as $\omega$-words are linearly ordered
	by $<_s$). Hence, if $u$ and $v$ were
	long enough members of $\pref(w_1)$, then only a finite number of elements of $P$ would fit between them
	(each of them being prefixes of the same $w_1$) and $P$ wouldn't be a dense set.
	
	So, moving back to the construction, for the base step, we choose an arbitrary word from $P$, for $x_\varepsilon$.
	Having defined $x_u\in P$ with $u\in\{0,2\}^*$, we define $x_{u0}$, $x_{u1}$ and $x_{u2}$ as follows.
	Since there are infinitely many infinite prefix chains in $P$ containing $x_u$, we can choose three different
	$\omega$-words, $w_1$, $w_2$ and $w_3$ with $x_u$ being a prefix of each of them and with $w_1<_s w_2<_sw_3$
	and of course with $\mathbf{Pref}(w_i)\cap P$ being infinite (i.e. three $\omega$-words corresponding to
	three different maximal prefix chains within $P$, each containing $x_u$).
	
	Since the three $\omega$-words differ, long enough prefices of $w_i$ are not prefices of the other two words,
	and since each $w_i$ is a limit of an infinite prefix chain, we can choose long enough prefices of each $w_i$
	which are in $P$ and not prefices of the other two $\omega$-words. We define $x_{u0}$, $x_{u1}$ and $x_{u2}$
	to be this prefix of $w_1$, $w_2$ and $w_3$ respectively. Then of course, $x_u<_p x_{u0}, x_{u1}, x_{u2}$
	as well as $x_{u0}<_sx_{u1}<_sx_{u2}$ are satisfied.
	
	Thus, the words of the form $u_{x1}$ form a dense subset of $P$.
	\end{proof}
	\begin{proposition}
		\label{prop-sub-of-scattered-is-scattered}
		If $L\subseteq\Sigma^*$ is a scattered language and $uK\subseteq\pref(L)$ for some word $u\in\Sigma^*$ and
		language $K\subseteq\Sigma^*$, then $K$ is scattered as well.
	\end{proposition}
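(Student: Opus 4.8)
The plan is to realize $(K,\leq_\ell)$ as an order-isomorphic copy of the sublanguage $uK$ of $\pref(L)$, and then to use the fact that a suborder of a scattered order is scattered. First I would verify that left-concatenation by the fixed word $u$ is strictly order-preserving and injective on $\Sigma^*$: if $w_1<_\ell w_2$, then either $w_1<_p w_2$, in which case $uw_1<_p uw_2$, or $w_1<_s w_2$ witnessed by $w_1=xay$ and $w_2=xbz$ with $a<b$, in which case $uw_1<_s uw_2$ is witnessed by $uw_1=(ux)ay$ and $uw_2=(ux)bz$; either way $uw_1<_\ell uw_2$, and $uw_1=uw_2$ plainly forces $w_1=w_2$. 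Hence $w\mapsto uw$ is an order isomorphism from $(K,\leq_\ell)$ onto $(uK,\leq_\ell)$.

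Next I would observe that $\pref(L)$ is scattered whenever $L$ is. This is immediate from the first item of Proposition~\ref{prop-rank-ops}, which asserts $\mathrm{rank}(L)=\mathrm{rank}(\pref(L))$ and in particular presupposes that $\pref(L)$ is scattered. (Alternatively one argues directly: if $\pref(L)$ were quasi-dense, then by Proposition~\ref{prop-dense-language-has-a-prefixfree-sublanguage} it would contain a prefix-free dense sublanguage $K'$; sending each word of $K'$ to a fixed member of $L$ of which it is a prefix is injective and order-preserving precisely because $K'$ is prefix-free, so $L$ would contain a dense suborder, a contradiction.)

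Finally, since $uK\subseteq\pref(L)$ and $\pref(L)$ is scattered, the suborder $(uK,\leq_\ell)$ is scattered as well — any dense suborder of $uK$ would be a dense suborder of $\pref(L)$ — and transporting this back along the isomorphism of the first step shows that $(K,\leq_\ell)$ is scattered. There is no genuine obstacle here; the only two points needing a line of care are that left-concatenation respects $\leq_\ell$ and that $\pref(L)$ inherits scatteredness from $L$, both of which are short.
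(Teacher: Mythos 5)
Your argument is correct and rests on the same two ingredients as the paper's proof: that left-concatenation by $u$ is an order embedding, and that a dense subset of a prefix language can be refined (via Proposition~\ref{prop-dense-language-has-a-prefixfree-sublanguage}) to a prefix-free dense set that then embeds into the language itself, so that $\pref(\cdot)$ preserves scatteredness. The paper merely packages this slightly differently, working with $u^{-1}L$ and $K\subseteq\pref(u^{-1}L)$ instead of with $uK\subseteq\pref(L)$, so this is essentially the same proof.
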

	\begin{proof}
		Assume $K$ is quasi-dense with $uK\subseteq\mathbf{Pref}(L)$ and let $X\subseteq K$ be a dense subset of $K$.
		Then, $uX$ is still dense and $uX\subseteq\mathbf{Pref}(L)$ which would imply $\mathbf{Pref}(L)$ being
		quasi-dense and thus by Proposition~\ref{prop-rank-ops}, $L$ would have to be quasi-dense as well,
		contradicting the assumptions of the proposition.
	\end{proof}
	\begin{corollary}
		\label{cor-product-members-are-scattered}
		If $L=L_1L_2$ is a nonempty scattered language, then so are $L_1$ and $L_2$.
	\end{corollary}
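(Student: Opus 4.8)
The plan is to reduce everything to Proposition~\ref{prop-sub-of-scattered-is-scattered}, using the fact that $L=L_1L_2$ being nonempty forces both $L_1$ and $L_2$ to be nonempty. So fix once and for all a word $u\in L_1$ and a word $v\in L_2$.

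For $L_2$ the argument is immediate: we have $uL_2\subseteq L_1L_2=L\subseteq\pref(L)$, so taking the word $u$ and the language $K=L_2$ in Proposition~\ref{prop-sub-of-scattered-is-scattered} gives that $L_2$ is scattered.

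For $L_1$ one has to be a little more careful, since the obvious map $x\mapsto xv$ from $L_1$ into $L_1v\subseteq L$ need not be order preserving on words that are prefix-comparable (appending $v$ can reverse a prefix comparison). The fix is not to embed $L_1$ directly but to pass through $\pref$: for every $x\in L_1$ we have $x\leq_p xv\in L_1v\subseteq L$, hence $x\in\pref(L)$, i.e. $L_1\subseteq\pref(L)$. Now apply Proposition~\ref{prop-sub-of-scattered-is-scattered} with the empty word $u=\varepsilon$ and $K=L_1$ (so that $\varepsilon K=L_1\subseteq\pref(L)$) to conclude that $L_1$ is scattered.

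The only genuinely nontrivial input is Proposition~\ref{prop-sub-of-scattered-is-scattered} itself; once that is available, the corollary is just the two lines above, and the main thing to get right is the asymmetry between the two factors — $L_2$ sits literally inside $L$ after prepending $u$, whereas $L_1$ only sits inside $\pref(L)$ after appending $v$, which is exactly the generality Proposition~\ref{prop-sub-of-scattered-is-scattered} was stated to cover.
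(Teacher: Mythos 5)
Your proof is correct and is essentially the derivation the paper intends: the corollary is stated without proof precisely because it follows immediately from Proposition~\ref{prop-sub-of-scattered-is-scattered} via the two instantiations you give ($uL_2\subseteq L\subseteq\pref(L)$ for a fixed $u\in L_1$, and $L_1=\varepsilon L_1\subseteq\pref(L)$ since each $x\in L_1$ is a prefix of $xv\in L$). Your remark that the naive map $x\mapsto xv$ need not be order-preserving, which is why one routes through $\pref$ rather than embedding $L_1$ into $L$ directly, correctly identifies the one subtlety.
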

	\begin{proof}
		Apply Proposition~\ref{prop-sub-of-scattered-is-scattered} with $K=L_1$ and $u=\varepsilon$ for
		$L_1$ and with $K=L_2$ and an arbitrary $u\in L_1$ for $L_2$.
	\end{proof}

\section{The main result}
We are ready to state the main result of the paper. After that, in this we sketch a birds-eye view of its proof,
which is fleshed out in the remaining Sections.
\begin{theorem}
\label{thm-main}
	The rank of any scattered one-counter language is smaller than $\omega^2$.
\end{theorem}
First, observe that it suffices to prove the statement for restricted one-counter languages.
Since a language is one-counter if and only if it can be constructed from restricted one-counter
languages by a finite number of union, product and star applications, we can use induction on the required number
of those applications where the induction steps are
\begin{itemize}
	\item if $K^*$ is a scattered one-counter language for the one-counter language $K$,
	  then by Proposition~\ref{prop-iterate-vstar},
	  $K^*\subseteq v^*$ for some word $v$ and hence has rank at most $1$,
	\item while if $K=L_1L_2$ is a scattered one-counter language for the one-counter languages
	  $L_1$ and $L_2$, then by Corollary~\ref{cor-product-members-are-scattered} both $L_1$ and
	  $L_2$ are scattered. Applying the induction hypothesis we get that the ranks of both $L_1$
	  and $L_2$ are smaller than $\omega^2$, thus applying Proposition~\ref{prop-rank-ops},
	  the rank of $L$ is at most $\mathrm{rank}(L_2)+\mathrm{rank}(L_1)$ that is still smaller than
	  $\omega^2$ if so are the two summands,
	\item and if $K=L_1\cup L_2$ is a scattered one-counter language for the one-counter languages
	  $L_1$ and $L_2$, then again, both of $L_1$ and $L_2$ are scattered as well, thus applying
	  the induction hypothesis and $\mathrm{rank}(K)\leq\mathrm{max}\{\mathrm{rank}(L_1),\mathrm{rank}(L_2)\}$
	  of Proposition~\ref{prop-rank-ops} we get that the rank of $K$ is again smaller than $\omega^2$.
\end{itemize}
Hence it is enough to show that any scattered restricted one-counter language has rank smaller than $\omega^2$.

We prove this in the following way:
\begin{itemize}
	\item We start from a transducer $M$ with $L(M)=L$ being a scattered language.
	\item First we assume that $M$ has the ``feasible cycle property'', stating that whenever
	  there is a cycle in the graph of $M$, then there exists a member $u$ of $D_1$ and a run of $M$
	  over $u$ which visits this cycle.
	\item Then we prove that, by studying the possible cycles in $M$ (which will be categorized to ``$0$-cycles'',
	``positive cycles'' and ``negative cycles'' based on the sign of the difference of the $0$s and $1$s in them)
	that for these transducers, the rank of $L$ has to be smaller than $\omega^2$ (using induction on the height
	of the connected components of the transducer).
	\item Finally, we prove that for each transducer $M$ there exists a transducer $M'$ satisfying the
	feasible cycles property with $L(M)=L(M')$.
\end{itemize}

\section{Transducers with feasible paths only - handling nonnegative cycles}
Let $M=(Q,\Sigma,\Delta,q_0,F,\mu)$ be a transducer. Without loss of generality, we can assume that $q_0$
is a source state with no incoming transitions, $F=\{q_f\}$ is a singleton and $q_f$ is a sink state with
no outgoing transitions. For each transition $(p,a,q)\in\Delta$, let $R_{p,a,q}$ stand for the (regular,
nonempty) output language $\mu(p,a,q)$ and by extension, if $(p,a,q)$ is not a transition in $M$,
then let $R_{p,a,q}$ be $\emptyset$, let $R_{p,\varepsilon,q}=\{\varepsilon\}$ and
for each word $u\in\{0,1\}^*$ and letter $a\in\{0,1\}$, let
$R_{p,ua,q}$ be $\mathop\bigcup\limits_{r\in Q}R_{p,u,r}R_{r,a,q}$. Then, each language $R_{p,u,q}$ is regular
and thus has a finite rank if it is scattered. Also, let $R_{p,q}$ stand for $\mathop\bigcup\limits_{u\in\{0,1\}^*}R_{p,u,q}$.

We also do the similar construction for the input part: let us define the $\leadsto$ relation as follows:
$p\mathop{\leadsto}\limits^\varepsilon q$ if and only if $p=q$, $p\mathop{\leadsto}\limits^aq$ for a letter $a\in\{0,1\}^*$
if and only if $(p,a,q)\in\Delta$ and $p\mathop{\leadsto}\limits^{ua}q$ for $u\in\{0,1\}^*$ and $a\in\{0,1\}$
if and only if there exists a state $r$ with $p\mathop{\leadsto}\limits^ur\mathop{\leadsto}\limits^aq$.
(That is, the ``reachability by an input word'' relation, without taking the output into account.)
Let $p\leadsto q$ denote that $p\mathop{\leadsto}^uq$ holds for some word $u$.
Then $(Q,\leadsto)$ can be seen as a directed transitive graph. Let a \emph{component} of $M$ be a strongly connected
component of this graph (that is, $p$ and $q$ are in the same component of $M$ if $p\leadsto q$ and $q\leadsto p$).

For two states $p,q\in Q$, let us denote by $\mathbf{In}(p,q)=\{u\in\{0,1\}^*:p\mathop{\leadsto}\limits^uq\}$
the language of those input words that can take $p$ to $q$.

A key property of transducers in this section is that of having feasible cycles only:
\begin{definition}
	The transducer $M$ has the \emph{feasible cycles property}, also said \emph{has feasible cycles only}
	if the following conditions hold:
\begin{enumerate}
	\item 
		Whenever $q\in Q$ is a state and $u\in\{0,1\}^*$ is an input word so that $q\mathop{\leadsto}\limits^u q$
		(also called a cycle on $q$),
		then there exist words $v,w\in\{0,1\}^*$ such that $vuw\in D_1$ and $q_0\mathop\leadsto\limits^vq\mathop\leadsto\limits^wq_f$.
	\item 
		Whenever $q\in Q$ is a state and $u\in\{0,1\}^*$ is an input word so that $q\mathop{\leadsto}\limits^uq$
		and $|u|_0>|u|_1$ (also called a positive cycle on $q$), then there exist a word $v$ and for each
		integer $t\geq 0$ a word $w_t$ so that each $vu^tw_t$ is in $D_1$ and
		$q_0\mathop\leadsto\limits^vq\mathop\leadsto\limits^{w_t}q_f$.
	\item
		Whenever $q\in Q$ is a state for which for each integer $N\geq 0$ there exist words $u_N$ and $v_N$
		with $q_0\mathop\leadsto\limits^{u_N}q\mathop\leadsto\limits^{v_N}q_f$, $u_Nv_N\in D_1$
		and $|u_N|_0-|u_N|_1\geq N$, then for each word $u\in\{0,1\}^*$ with $q\mathop\leadsto\limits^{u}q$,
		there exists some integer $N\geq 0$
		and word $v$ with $u_Nuv\in D_1$ and $q\mathop\leadsto\limits^{v}q_f$.
%		\item For each pair $p\neq q_0,q\neq q_f$ of states and input word $u$ with
%	$p\mathop{\leadsto}\limits^u q$, there exist words $v,w\in\{0,1\}^*$ such that $q_0\mathop{\leadsto}\limits^vp
%	\mathop{\leadsto}\limits^uq\mathop{\leadsto}\limits^wq_f$ and $vuw\in D_1$.
%		\item For each state $q\in Q$ and input word $u\in\mathbf{Pref}(D_1)$ with $q_0\mathop{\leadsto}\limits^uq$,
%		there exists a word $w\in\{0,1\}^*$ with $uw\in D_1$ and $q\mathop{\leadsto}\limits^wq_f$.
%		\item For each state $p\in Q$ and input word $u\in\mathbf{Suf}(D_1)$ with $p\mathop{\leadsto}\limits^uq_f$,
%		there exists a word $v\in\{0,1\}^*$ with $vu\in D_1$ and $q_0\mathop{\leadsto}\limits^vp$.
	\end{enumerate}
\end{definition}

We begin studying such transducers by classifying its possible cycles. First we show that if in a component
there exists some \emph{positive cycle}, that is, a state $p$ and a word $u$ with $|u|_0>|u|_1$ and
$p\mathop{\leadsto}\limits^up$, then all the cycles on this state can only generate a language of order type
at most $\omega$ (and thus of rank at most $1$).
\begin{proposition}
	Assume $M$ has feasible cycles only, $L(M)$ is scattered
	and there exists a state $p$ of $M$ and some word $u\in\mathbf{In}(p,p)$
	with $|u|_0>|u|_1$.
	
	Then there exists a primitive word $r_p\in\Sigma^*$ such that $R_{p,p}\subseteq r_p^*$.
\end{proposition}
\label{prop-positive-cycles-imply-rank-1-inner-loops}
\begin{proof}
	Observe that since $q_0$ is a source and $q_f$ is a sink, $p\notin\{q_0,q_f\}$.
	Let us consider the output language $R_{p,u,p}$. Then for each $t\geq 1$ we have
	$R_{p,u,p}^t\subseteq R_{p,u^t,p}$. Moreover, as $M$ has feasible cycles only,
	there exist input words
	$v,w_t\in\{0,1\}$ such that $q_0\mathop{\leadsto}\limits^v p\mathop{\leadsto}\limits^w_t q_f$, and $vu^tw_t\in D_1$, thus in particular, $vu^t\in\mathbf{Pref}(D_1)$. 
	Hence, for an arbitrary output word $v'\in R_{q_0,v,p}$ and integer $t\geq 1$, 
	we get $v'R_{p,u^t,p}R_{p,w_t,q_f}\subseteq L(M)$ and (as $R_{p,u,p}^t\subseteq R_{p,u^t,p}$)
	thus $v'R_{p,u,p}^t\subseteq\mathbf{Pref}(L(M))$. Hence $v'R_{p,u,p}^*\subseteq\mathbf{Pref}(L(M))$, the latter
	being a scattered language by Proposition~\ref{prop-sub-of-scattered-is-scattered}, $R_{p,u,p}^*$ is scattered
	as well, which implies $R_{p,u,p}^*\subseteq r_u^*$ for some primitive output word $r_u\in\Sigma^*$ by
	Proposition~\ref{prop-iterate-vstar}.
	
    Now let $x\in R_{p,p}$, say $x\in R_{p,u_1,p}$ for some word $u_1\in\{0,1\}^*$, be arbitrary.
    Then $u_2=u^{|u_1|+1}u_1$ is also a member of $\mathbf{In}(p,p)$ (as both $u$ and $u_1$ can lead from $p$ to $p$)
    and has more $0$s than $1$s (as $|u|_0>|u|_1$, it is sure that $|u_1|+1$ copies of $u$ has more than $|u_1|$
    more $0$s than $1$s, hence $u^{|u_1|+1}u_1$ still has more $0$s than $1$s).
    Hence,
    $v'(R_{p,u,p}^{|u_1|+1}R_{p,u,p}^*R_{p,u_1,p})^*\subseteq\mathbf{Pref}(L(M))$ which implies
    $R_{p,u,p}^{|u_1|+1}R_{p,u,p}^*R_{p,u_1,p}$ being a subset of $r_p^*$ for some primitive word $r_p\in\Sigma^*$.
    But then, by $R_{p,u,p}^*\subseteq r_u^*$ for the also primitive word $r_u$,
    it has to be the case $r_p=r_u$ and so 
    $R_{p,u_1,p}\subseteq r_u^*$ as well, thus as $x\in R_{p,p}$ was arbitrary we indeed get $R_{p,p}\subseteq r_p^*$
    for the $r_p=r_u$ defined above.
\end{proof}
\begin{corollary}
If $M$ has feasible cycles only and $C$ is a component of $M$ such that for some $p\in C$ there exists an
$u\in\{0,1\}^*$ with $p\mathop{\leadsto}\limits^up$ and $|u|_0>|u|_1$,
then for each state $q\in C$ there exists a primitive word
$r_q\in\Sigma^*$ such that $R_{q,q}\subseteq r_q^*$.
\end{corollary}
\begin{proof}
	Let $p\mathop{\leadsto}\limits^up$, and $q$ be in the same component as $p$, say $q\mathop{\leadsto}\limits^{u_1} p
	\mathop{\leadsto}\limits^{u_2}q$. Then for the word $u'=u_1u^{|u_1u_2|+1}u_2$ we have $q\mathop{\leadsto}\limits^{u'}q$ and $|u'|_0>|u'|_1$: applying Proposition~\ref{prop-positive-cycles-imply-rank-1-inner-loops} on $q$ and $u'$ we get the corollary. 
\end{proof}
Now we turn our attention towards ,,$0$-cycles'', input words $u$ with $|u|_0=|u|_1$ and states $p$ with
$p\mathop{\leadsto}\limits^up$.
\begin{proposition}
\label{prop-zero-cycles-imply-rank-1}
If $M$ has feasible cycles only, $L(M)$ is scattered, $p$ is a state of $M$, and there exists some word $u\in\{0,1\}^+$ with
$|u|_0=|u|_1$ and $p\mathop{\leadsto}\limits^u p$, then there exists a primitive output word $r_p$,
depending only on $p$, such that whenever $|v|_0=|v|_1$ and $p\mathop{\leadsto}\limits^vp$ for a word $v\in\{0,1\}^*$,
then $R_{p,v,p}\subseteq r_p^*$.
\end{proposition}
Note that if there is also a positive cycle on $p$, its $r_p$ from Proposition~\ref{prop-positive-cycles-imply-rank-1-inner-loops} has to be the same as $r_p$ of Proposition~\ref{prop-zero-cycles-imply-rank-1} so using $r_p$ for both Propositions' primitive roots is not
ambiguous.
\begin{proof}
	Analogously to the proof of Proposition~\ref{prop-positive-cycles-imply-rank-1-inner-loops},
	assume $M$, $p$ and $u$ satisfy the conditions of the Proposition. Then, there exist words $u_1$ and $u_2$
	with $q_0\mathop{\leadsto}\limits^{u_1}p\mathop{\leadsto}\limits^up\mathop{\leadsto}\limits^{u_2}q_f$
	and $u_1uu_2\in D_1$, implying $u_1u^tu_2\in D_1$ for each $t\geq 0$.
	Thus if $x\in R_{q_0,u_1,p}$ is arbitrary, we get
	$xR_{p,u,p}^t\subseteq xR_{p,u^t,p}\subseteq\mathbf{Pref}(L(M))$, hence $xR_{p,u,p}^*\subseteq\mathbf{Pref}(L(M))$
	and thus $R_{p,u,p}\subseteq r_p^*$ for some primitive word $r_p\in\Sigma^*$ as $L(M)$ is scattered.
	
	Now if $v\in\{0,1\}^*$ is also some word with $|v|_0=|v|_1$ and $p\mathop{\leadsto}\limits^vp$, then
	there exist some words $v_1,v_2\in\{0,1\}^*$ such that $q_0\mathop{\leadsto}\limits^{v_1}p\mathop{\leadsto}\limits^{v}p\mathop{\leadsto}\limits^{v_2}q_f$.
	Now depending on whether $|v_1|_0-|v_1|_1$ or $|u_1|_0-|u_1|_1$ is greater, for some $w\in\{u,v\}$
	we have $w_1uw_2$ and $w_1vw_2$ both being in $D_1$, hence both $w_1u$ and $w_1v$ are in $\mathbf{Pref}(D_1)$
	and thus by picking an arbitrary $x\in R_{q_0,w_1,p}$ we get that $x(R_{p,u,p}\cup R_{p,v,p})^*\subseteq\mathbf{Pref}(D_1)$, implying $R_{p,u,p}\cup R_{p,v,p}$ being contained in $r_v^*$ for some
	primitive word $r_v\in\Sigma^*$ but as we already know that $R_{p,u,p}\subseteq r_p^*$, it has to be the case
	that $r_v=r_p$, proving the statement of the Proposition.
\end{proof}
There is a third option where cycles in $M$ can only output a language of rank at most one:
\begin{proposition}
	\label{prop-infinite-nx-imply-rank-1}
	Assume $M$ has feasible cycles only, $L(M)$ is scattered and for some state $p$ of $M$ and output
	word $x'\in\Sigma^*$
	it holds that for any integer $N\geq 0$ there exists some words $u_N$ and $v_N$ with
	$q_0\mathop{\leadsto}\limits^{u_N}p$, $p\mathop\leadsto\limits^{v_N}q_f$, 
	$x'\in R_{q_0,u_N,p}$ and $|u_N|_0\geq |u_N|_1+N$.
	
	Then $R_{p,p}\subseteq r_p^*$ for some primitive word $r_p$ (and consequently, for each state $q$
	belonging to the same component of $M$ as $p$, there is a primitive word $r_q$ with $R_{q,q}\subseteq r_q^*$).
\end{proposition}
\begin{proof}
	Let $p\in Q$ and $x'\in\Sigma^*$ satisfy the condition of the Proposition.
	
	Assume $u'_1$ and $v'_1$ are both members of $R_{p,p}$.
	We prove that $x'\{u'_1,v'_1\}^*\subseteq\mathbf{Pref}(L(M))$ which, as $L(M)$ is scattered, implies that $\{u'_1,v'_1\}\subseteq r_p^*$ for some primitive word $r_p\in\Sigma^*$ which implies $u'_1$ and $v'_1$
	have the same primitive root. As $u'_1$ and $v'_1$ were chosen arbitrarily from $R_{p,p}$, all the members
	of $R_{p,p}$ have to have this same primitive root $r_p$, proving the statement.
	
	So let $u_1,v_1\in\{0,1\}^*$ be input words with $u'_1\in R_{p,u_1,p}$ and $v'_1\in R_{p,v_1,p}$
	and let $y'\in\{u'_1,v'_1\}^*$ be arbitrary. Then, $y'\in R_{p,y,p}$ for some $y\in\{u_1,v_1\}^*$.
	Since $M$ has feasible cycles only, there is an integer $N\geq 0$ and word $v$ with $u_Nyv\in D_1$,
	$q_0\mathop\leadsto\limits^{u_N}p\mathop\leadsto\limits^vq_f$.
	Thus, as $x'\in R_{q_0,u_N,p}$, we get $x'y'\in\mathbf{Pref}(L(M))$ and so, as $y'$ was an arbitrary
	member of $\{u'_1,v'_1\}^*$, we get that $x'\{u'_1,v'_1\}^*\subseteq\mathbf{Pref}(L(M))$ indeed holds,
	proving the claim.
\end{proof}
\section{Transducers with feasible cycles only -- the finishing move}

In this section we show that if $M$ is a transducer with feasible cycles only such that $L(M)$ is scattered,
then the rank of $L(M)$ is smaller than $\omega^2$. To this end, for such a transducer $M=(Q,\Sigma,\Delta,q_0,\{q_f\},\mu)$, let us call a transition $\delta=(p,a,q)\in\Delta$ an \emph{intercomponent}
transition if $p$ and $q$ belong to different components of $M$. As $q_0$ is a source and $q_f$ is a sink,
transitions involving these two states are always intercomponent transitions. 
Now let us define for each intercomponent transition $\delta=(p,a,q)$ the language $L(\delta)$ as
\[\mathop\bigcup\left(R_{q_0,u,p}R_{p,a,q}:~u\in\{0,1\}^*:~\exists v\in\{0,1\}^*~uav\in D_1,q_0\mathop\leadsto\limits^up,q\mathop\leadsto\limits^vq_f\right)\]
%\[\mathop\bigcup\limits_{u\in\{0,1\}^*:ua\in\mathbf{Pref}(D_1),q_0\mathop{\leadsto}\limits^up}R_{q_0,u,p}R_{p,a,q},\]
that is, the language containing all possible output words that are associated with a run in $M$ that starts in $q_0$
and ends in $q$, using the transition $\delta$ as its last step.

We will show that each such $L(\delta)$ is scattered with rank smaller than $\omega^2$. As
$L(M)\subseteq \mathop\bigcup\limits_{\delta=(q,a,q_f)}L(\delta)$ (not necessarily being the same as
the latter language contains not only all the images of the words $u\in D_1$ with $q_0\mathop\leadsto\limits^uq_f$
but also the images of those which are not in $D_1$ but in $\mathbf{Pref}(D_1)$), and each transition arriving
to $q_f$ is an intercomponent one, thus by Proposition~\ref{prop-rank-ops} and that suborderings cannot have
a larger rank we get $\mathrm{rank}(L(M))\leq\max\{\mathrm{rank}(L(\delta)):~\delta=(q,a,q_f)\}<\omega^2$,
proving the main result of the section.

For two intercomponent transitions $\delta=(p,a,q)$ and $\delta'=(p',a',q')$ let us write $\delta<\delta'$
if $q\leadsto p'$. Then this relation $<$ is a strict partial order on the finite set of intercomponent transitions
(as should $q'\leadsto p$ also hold, then the states $p,q,p'$ and $q'$ all belong to the same component of $M$
and thus neither $\delta$ nor $\delta'$ would be intercomponent). We will apply induction with respect to $<$.

For an intercomponent transition $\delta'=(p',a',q')$
with $q'$ being in the same component $C$ as $p$, and a word $u\in\{0,1\}^*$ with $ua'\in\mathbf{Pref}(D_1)$ and $q_0\mathop\leadsto\limits^up'$, let us define the input language $\mathbf{In}(u,\delta',\delta)$ as the set
of those words $v\in\{0,1\}^*$ such that $q'\mathop\leadsto\limits^vp$ and $ua'va\in\mathbf{Pref}(D_1)$,
that is: if a computation path enters $q'$ via $\delta'$ after reading the word $ua'$, then it can read $v$
with being still inside $C$, end in $p$, then leave $C$ via $\delta$ so that the input word $ua'va$ read so far
is still in $\mathbf{Pref}(D_1)$. 

We organize the proof of the induction step into a separate proposition.
\begin{proposition}
	\label{prop-ldelta-inductive}
	Assume $M=(Q,\Sigma,\Delta,q_0,\{q_f\},\mu)$ is a transducer with feasible cycles only and $L(M)$ is scattered.
	Let $\delta=(p,a,q)$ be an intercomponent transition of $M$ and assume for each intercomponent transition
	$\delta'<\delta$, $L(\delta')$ is scattered of rank smaller than $\omega^2$.
	
	Then $L(\delta)$ is also scattered and also has rank smaller than $\omega^2$.
\end{proposition}
\begin{proof}
	The case when $p=q_0$ is clear, with $q_0$ being a source state, $L(\delta)$ is then simply either the
	regular language $R_{q_0,a,q}$ (if $a=0$) or $\emptyset$ (if $a=1$ as no word in $D_1$ can start with
	the letter $1$). As scattered regular languages always have a finite rank, the claim is proved for this
	case.
	
	Now assume $p$ belongs to some component $C\neq \{q_0\}$. Let $\Delta'\subseteq\Delta$ be the
	set of intercomponent transitions entering $C$, that is, of the form $(p',a',q')$ with $p'\notin C$ and
	$q'\in C$. Then,
	\[L(\delta)~=~\mathop\bigcup\limits_{(p',a',q')\in\Delta'}\mathop\bigcup\limits_{u\in\{0,1\}^*:q_0\mathop\leadsto\limits^up',ua'\in\mathbf{Pref}(D_1)}\mathop\bigcup\limits_{v\in\mathbf{In}(u,\delta',\delta)}R_{q_0,u,p'}R_{p',a',q'}
	R_{q',v,p}R_{p,a,q}.\]
	(The reason: any valid computation path over some word within $\mathbf{Pref}(D_1)$
	that leaves $C$ via $\delta$, has to enter $C$ first, via some intercomponent
	transition $\delta'$, after reading in some input word $u$, then taking $\delta'$, after take some route
	within $C$ ending in $p$, reading in some input word $v$ during this phase,
	and then finally taking $\delta$ as well, making sure that the word $ua'va$ read in so far still belongs
	to $\mathbf{Pref}(D_1)$.)
	
	Of course the $R$ languages there are each regular and thus each of them (and their product
	as well) has a finite rank but this does not entail the result as there are infinite unions there. However, the very
	first union is finite as there are only a finite number of transitions, so if we can show that the
	language
	\[\mathop\bigcup\limits_{u\in\{0,1\}^*:q_0\mathop\leadsto\limits^up',ua'\in\mathbf{Pref}(D_1)}\mathop\bigcup\limits_{v\in\mathbf{In}(u,\delta',\delta)}R_{q_0,u,p'}R_{p',a',q'}
R_{q',v,p}R_{p,a,q}\]
	is scattered of rank at most $\omega^2$ for each $(p',a',q')\in\Delta'$, then the statement is proved
	(applying Proposition~\ref{prop-rank-ops}).
	To this end, let us rewrite the above union as follows: for each output word $x\in\Sigma^*$,
	let $U_x\subseteq\{0,1\}^*$ contain those words $u\in\{0,1\}^*$ with $ua'\in\mathbf{Pref}(D_1)$
	for which there exists some $v'\in\{0,1\}^*$ such that $q_0\mathop\leadsto\limits^up'\mathop\leadsto\limits^{a'}q'
	\mathop\leadsto\limits^{v'}q_f$, $ua'v'\in D_1$
	and $x\in R_{q_0,u,p'}$. (Of course $U_x$ might be empty if there is no suitable $u$ at all.)
	Then, we can write the above union as
	\[\mathop\bigcup\limits_{x\in \mathop\bigcup\limits_{u'a'\in\mathbf{Pref}(D_1)}R_{q_0,u',p'}}\left(\mathop\bigcup\limits_{u\in U_x}\mathop\bigcup\limits_{v\in\mathbf{In}(u,\delta',\delta)}xR_{p',a',q'}R_{q',v,p}R_{p,a,q}\right).\]
	The reason why this can help us is the last part of Proposition~\ref{prop-rank-ops}:
	substituting $w=x$, $K=\mathop\bigcup\limits_{u'a'\in\mathbf{Pref}(D_1)}R_{q_0,u',p'}$ and $L_w=\mathop\bigcup\limits_{u\in U_x}\mathop\bigcup\limits_{v\in\mathbf{In}(u,\delta',\delta)}xR_{p',a',q'}R_{q',v,p}R_{p,a,q}$
	we exactly have a language of the form $\mathop\bigcup\limits_{w\in K}wL_w$ here.
	Regarding $R=\mathop\bigcup\limits_{u'a'\in\mathbf{Pref}(D_1)}R_{q_0,u',p'}$, 
	we have $RR_{p',a',q'}=L(\delta')$ which is scattered and has rank smaller than $\omega^2$ by the assumption
	of the Proposition. Thus as $R$ is a subset of $\mathbf{Pref}(L(\delta'))$, it's also scattered and has 
	a rank $\alpha$ which is smaller than $\omega^2$.
	
	So if we manage to show that all the languages of the form $L_w$ above are scattered
	and have some rank smaller than some $\beta$, then by Proposition~\ref{prop-rank-ops} we get that
	the whole union $\mathop\bigcup\limits_{w\in K}L_w$ is scattered of rank at most $\beta+\alpha$.
	Now if $\beta$ is smaller than $\omega^2$ (that is, it has the form $\omega\times k+n$ for some integers
	$k$ and $n$) and so is $\alpha$, then their sum still is smaller than $\omega^2$ and the Proposition is proved.
	
	So let us fix a word $x\in\{0,1\}^*$ belonging to $\mathop\bigcup\limits_{u'a'\in\mathbf{Pref}(D_1)}R_{q_0,u',p'}$
	and consider the language
	\begin{equation}
		\label{eq-distributive-juggling}
		\mathop\bigcup\limits_{u\in U_x}\mathop\bigcup\limits_{v\in\mathbf{In}(u,\delta',\delta)}R_{p',a',q'}R_{q',v,p}R_{p,a,q}~=~R_{p',a',q'}
	\Bigl(\mathop\bigcup\limits_{u\in U_x}\mathop\bigcup\limits_{v\in\mathbf{In}(u,\delta',\delta)}R_{q',v,p}\Bigr)R_{p,a,q}.
	\end{equation}
	That's a product of three languages, with the outermost two being regular, scattered nonempty languages, hence
	having a finite rank. Thus, if the languages of the form
	\begin{equation}
	\label{eq-onlytwounionsleft}
		\mathop\bigcup\limits_{u\in U_x}\mathop\bigcup\limits_{v\in\mathbf{In}(u,\delta',\delta)}R_{q',v,p}
	\end{equation}
	can be shown to have rank $\beta$ smaller than $\omega^2$, then by Proposition~\ref{prop-rank-ops},
	the product of these three languages will have a rank smaller than $n+\beta+k$ for some integers $n$
	and $k$, which is still smaller than $\omega^2$ if so is $\beta$.
	
	Now we rewrite again the union above to a more managable form. Observe that if $u_1$ and $u_2$
	are in $\mathbf{Pref}(D_1)$ with $|u_1|_0-|u_1|_1=|u_2|_0-|u_2|_1$, then for any word $v$,
	$u_1a'va\in\mathbf{Pref}(D_1)$ if and only if $u_2a'va\in\mathbf{Pref}(D_1)$ (the set of possible
	suffixes depends only on the current number of still opened parentheses, assuming the prefix so far
	is valid at all). So let $N_x$ stand for the set $\{|ua'|_0-|ua'|_1:u\in U_x\}$ of nonnegative integers
	and for each integer $n$, let $\mathbf{In}(n,\delta',\delta)\subseteq\{0,1\}^*$ stand for the set
	$\mathop\bigcup\limits_{u\in U_x,|ua'|_0-|ua'|_1=n}\mathbf{In}(u,\delta',\delta)$, that is, the 
	set of those input words $v$ which can lead $M$ from $q'$ to $p$ and then use $\delta$
	while ``closing at most $n$ opening parentheses'',
	i.e. if $|ua'|_0-|ua'|_1=n\geq 0$ for a word $ua'\in\mathbf{Pref}(D_1)$, then $ua'va$ is still in $\mathbf{Pref}(D_1)$.
	We can rewrite Equation~\ref{eq-onlytwounionsleft} as
	\begin{equation}
	\label{eq-againthreeunions}
	\mathop\bigcup\limits_{n\in N_x}\mathop\bigcup\limits_{v\in\mathbf{In}(n,\delta',\delta)}R_{q',v,p}.
	\end{equation}
	Now the part $\mathbf{In}(n,\delta',\delta)$ contains those words $v$ from $\{0,1\}^*$ which can be
	appended after a word in $\mathbf{Pref}(D_1)$ still having $n$ opened parentheses so that the resulting
	word is still in $\mathbf{Pref}(D_1)$, moreover, $v$ can lead from $q'$ to $p$. This is still an infinite
	union from which we aim to create a finite one.
	
	Call a state $r\in C$ \emph{loopable} if $R_{r,r}\subseteq u_r^*$ for some primitive word $u_r$.
	By Propositions~\ref{prop-positive-cycles-imply-rank-1-inner-loops} and 
	%\ref{prop-zero-cycles-imply-rank-1} and 
	\ref{prop-infinite-nx-imply-rank-1} we have that
	\begin{itemize}
		\item if there is some word $u\in\Sigma^*$ with $|u|_0>|u|_1$ and a state $r'\in C$ with
		$r'\mathop\leadsto\limits^ur'$, then all the states of $C$ are loopable;
		\item if $N_x$ is infinite, then all the states of $C$ are loopable.
	\end{itemize}
	Now let us fix a word $v=a_1\ldots a_k\in\mathbf{In}(n,\delta',\delta)$.
	Then $R_{q',v,p}$ is a subset of the union of the languages of the form
	\begin{equation}
	\label{eqn-allowed-product-basic}
		R_{q_1,a_1,q_2}R_{q_2,a_2,q_3}\ldots R_{q_k,a_k,q_{k+1}}
	\end{equation}
	with the union ranging over all the possible sequences $q'=q_1,q_2,\ldots,q_k,q_{k+1}=p$ within $C$.
	For each product of the form \ref{eqn-allowed-product-basic}, there exists at least one, possibly
	more, product of the form
	\begin{equation}
	\label{eqn-allowed-product-advanced}
    R=R_{q'_1,v_1,q'_2}R_{q'_2,v_2,q'_3}\ldots R_{q'_\ell,v_\ell,q'_{\ell+1}}
	\end{equation}
	with $\ell\geq 0$, $q'=q'_1,q'_2,\ldots,q'_\ell,q'_{\ell+1}=p$ being a state sequence within $C$,
	each $v_i\in\{0,1\}^+$ being a word with $v=v_1\ldots v_\ell$ such that whenever $|v_i|>1$,
	then $q'_i=q'_{i+1}$ and either $q'_i$ is a loopable state, or $|v_i|_0=|v_i|_1$ (or both),
	moreover, $R_{q_1,a_1,q_2}\ldots R_{q_k,a_k,q_{k+1}}\subseteq R$,
	with one possible such product being the decomposition~\ref{eqn-allowed-product-basic} itself.
	Let us choose one such product $R_1\ldots R_\ell$ satisfying the above conditions for~\ref{eqn-allowed-product-advanced} which minimizes $\ell$.
	
	Now we bound $\ell$ in terms of $|C|$ and $n$.
	Observe that if for a product of the form~\ref{eqn-allowed-product-advanced}
	there exist $i<j$ with $q'_i=q'_{j+1}$ either being a loopable state, or with
	$v_i\ldots v_j$ having the same number of $0$s and $1$s, then the factor sequence
	$R_{q'_i,v_i,q_{i+1}}\ldots R_{q'_j,v_j,q'_{j+1}}$ can be replaced to its superset $R_{q'_i,v_i\ldots v_j,q'_{j+1}}$,
	lowering the number of factors and still producing a product formed satisfying the condition, so
	in a shortest product of the form \ref{eqn-allowed-product-advanced} no loopable states get repeated
	and there is no $0$-cycle either spanning over more factors.
	
	We do now a case analysis.
	
	{\textbf{Case 1.}} If all the states are loopable, this means $\ell$ is at most $2|C|$ as no $q'_i$ can
	be the same as $q'_j$ for any $i<j$ unless $j=i+1$. Hence the longest possible shortest sequence can have
	$q'_1=q'_2$, then $q'_3=q'_4$, and so on, for $q'_{2|C|-1}$ and $q'_{2|C|}$ to finish the sequence (enumerating
	each state of $C$ in some order, spelling each state twice).
	
	{\textbf{Case 2.}} If not all the states are loopable, then $N_x$ is finite and there are absolutely no
	words $u$ and states $r\in C$ with $r\mathop\leadsto\limits^ur$ and $|u|_0>|u|_1$ (due to Propositions
	~\ref{prop-positive-cycles-imply-rank-1-inner-loops} and~\ref{prop-infinite-nx-imply-rank-1}).
	Then, each factor of the form $R_{r,v,r}$ with $|v|>1$ has $|v|_0=|v|_1$ and all other factors have
	the form $R_{q_i,a_i,q_{i+1}}$ for some $i$. Now assume $\ell>(n+1+|C|)\cdot|C|$. Then there is a state
	$r$ which appears at least $n+2+|C|$ times in the sequence $q'_1,\ldots,q'_{\ell+1}$.
	By minimality of $\ell$, if $q'_i=q'_{j+1}$ for some $i<j$, then for the word $v[i,j]=v_iv_{i+1}\ldots v_j$
	we have $|v[i,j]|_0<|v[i,j]|_1$ (since if it were the other way around, we would have a positive cycle
	and if they were equal, we could collapse this interval into $R_{q'_i,v[i,j],q'_{j+1}}$). Hence
	if $r$ appers at least $n+2+|C|$ times in the sequence, with $q'_i$ being its first appearance and $q'_{j+1}$
	being the last one, then for the word $v[i,j]$ we have $|v[i,j]|_0+n+1+|C|<|v[i,j]|_1$. Since we started
	from a word $v\in\mathbf{In}(n,\delta',\delta)$, it has to be the case that $|v[1,j]|_0+n\geq |v[1,j]|_1$
	as otherwise $v$ could not be in any set $\mathbf{In}(u,\delta',\delta)$ with $|ua'|_0-|ua'|_1=n$ as
	the word $ua'v[1,j]$, which is a prefix of $ua'va\in\mathbf{Pref}(D_1)$ would contain more $1$s than $0$s
	which cannot happen. Hence, from $|v[i,j]|_0+n+1+|C|<|v[i,j]|_1$ and $|v[1,j]|_0+n\geq |v[1,j]|_1$
	we get $|v[1,i-1]|_0> |v[1,i-1]|_1+|C|$, that is, in the prefix $v[1,i-1]$ there have to be much more $0$s
	than $1$s to be able to handle all the $1$s arriving later with still remaining in $\mathbf{Pref}(D_1)$.
	However, whenever $|v_k|_0\neq |v_k|_1$, then (since we are within Case 2) $v_k$ has to be one of the symbols
	$0$ or $1$, so the difference between $|v[1,k]|_0-|v[1,k]|_1$ and $|v[1,k+1]|_0-|v[1,k+1]|_1$ is at most $1$
	for each index $k$ (and starts from $0$ as $v[1,0]$ can be seen as the empty word).
	
	Now this means that if we reach up to $|C|+1$ with this difference, then there are indices $i_1<i_2<\ldots<i_{|C|+1}$
	in the sequence such that for each $k=1,\ldots,|C|+1$, $i_k$ is the first index satisfying $|v[1,i_k]|_0-|v[1,1_{k+1}]|_1=k$. Hence, as this sequence has $|C|+1$ elements, there has to be a state
	which gets repeated but if (say) $q'_{i_k}=q'_{i_m}$ for $k<m$, then for the word $w=v[i_k,i_{m-1}]$ we would get
	$q'_{i_k}\mathop\leadsto\limits^w q'_{i_k}$ with $|w|_0-|w|_1>0$, contradicting to the assumption we are
	in Case 2.
	
	Hence, in this case $\ell$ is at most $(n+1+|C|)\cdot|C|$. Note that this quantity does not depend on $v$
	anymore, only on $n$ (which depends on the word $x$) and on $|C|$ (which depends on $\delta$).
	
	Thus we now know that each product of the form~\ref{eqn-allowed-product-basic} is a subset of a product
	of the form
	$R_1\ldots R_\ell$ with each $\ell$ being at most $(n+1+|C|)\cdot |C|$ and each $R_i$ being
	either an $R_{q'_i,a'_i,q'_{i+1}}$ for some states $q'_i$ and $q'_{i+1}$ of $C$ and letter $a'_i\in\{0,1\}$,
	or an $R_{q'_i,v_i,q'_i}$ with either $q'_i$ being a loopable state, or $|v_i|_0=|v_i|_1$, in both cases
	being a scattered language of rank at most $1$ (as in both cases, these languages are subsets of 
	$w^*$ for some appropriate primitive word $w$ due to Propositions~\ref{prop-positive-cycles-imply-rank-1-inner-loops} and \ref{prop-zero-cycles-imply-rank-1}).
	The other languages $R_{q'_i,a'_i,q'_{i+1}}$ are all regular languages, defined by the transducer and there
	is only a finite number of them. Hence, there is an absolute constant integer $N\geq 0$ depending only on
	$M$ such that each language
	$R_{q'_i,v_i,q'_i}$ is scattered of at most $N$, thus by Proposition~\ref{prop-rank-ops} the rank of
	each such product $R_1\ldots R_\ell$ is bounded by either $2\cdot|C|\cdot N$ (in Case 1) or by 
	$(n+1+|C|)\cdot |C|\cdot N$ (in Case 2). As there are only $|\Delta|$ ``elementary'' languages and at most $|C|$
	languages of either the form $R_{q'_i,q'_i}$ (when $q'_i$ is loopable) or $\mathop\bigcup\limits_{|u|_0=|v|_0}R_{q'_i,u,q'_i}$ (when $v_i$ is a $0$-cycle), if we let
	$R'$ denote the union of all the languages of the form $R_{q'_i,a'_i,q'_{i+1}}$, $R_{q'_i,q'_i}$ for $q'_i$ loopable
	and $\mathop\bigcup\limits_{|u|_0=|v|_0}R_{q'_i,u,q'_i}$, we get that $R'$ is a finite union of languages
	of finite rank, and also for each $n\in N_x$, $\mathop\bigcup\limits_{v\in\mathbf{In}(n,\delta',\delta)}R_{q',v,p}$ is a subset of $R'^\ell$ for the
	power $\ell$ computable from $M$ and $n$, hence these languages are products of scattered languages of
	finite rank, thus they are also scattered of finite rank as well.
	
	When $N_x$ is finite, this makes the language of Equation~\ref{eq-againthreeunions} to be a subset of a
	finite union of scattered languages, each having a finite rank and we are done proving that the languages
	of the form Equation~\ref{eq-againthreeunions} are always scattered and have a finite rank.
	
	Now, $N_x$ in Equation~\ref{eq-againthreeunions} might be infinite but in that case all the states are
	loopable and $\ell\leq 2\cdot |C|$ does not depend on $n$. Hence in that case the whole union itself
	is a subset of the language $R'^{2\cdot |C|}$ which is again a scattered language of finite rank as so is $R'$.
	
	In summary, we proved that the languages in Equation~\ref{eq-againthreeunions} are scattered and have a finite rank, which in turn implies the languages of Equation~\ref{eq-onlytwounionsleft} are also scattered and have a finite
	rank, hence their rank (which was called $\beta$ just above Equation~\ref{eq-distributive-juggling})
	is indeed smaller than $\omega^2$ (smaller than $\omega$ actually), finishing the proof of the Proposition.	
\end{proof}
\begin{corollary}
	\label{cor-feasible-are-small}
	Suppose $M$ is a transducer having feasible cycles only and $L(M)$ is scattered.
	Then the rank of $L(M)$ is smaller than $\omega^2$.
\end{corollary}
\begin{proof}
	Applying Proposition~\ref{prop-ldelta-inductive} as the inductive step, we get that for each intercomponent
	transition $\delta$, the language $L(\delta)$ is scattered and has a rank smaller than $\omega^2$.
	As $L(M)$ itself is a subset of the finite union $\mathop\bigcup\limits_{\delta=(q,a,q_f)\in\Delta}L(\delta)$,
	the claim is proved applying Proposition~\ref{prop-rank-ops}.
\end{proof}

\section{Making all cycles feasible}

	In this section we show that for every transducer $M$ there is a transducer $M'$ having feasible
	cycles only with $L(M)=L(M')$, assuming $L(M)\neq\emptyset$.
	Together with Corollary~\ref{cor-feasible-are-small} it implies that
	scattered restricted one-counter languages have rank smaller than $\omega^2$, which in turn implies
	Theorem~\ref{thm-main}.

	Let us define the (net) \emph{opening depth} of a word $w\in\{0,1\}^*$ as
	$\mathrm{open}(w)=|w|_0-|w|_1$. Clearly, a word $w$ belongs to $\mathbf{Pref}(D_1)$
	if and only if $\mathrm{open}(w')\geq 0$ for each prefix $w'$ of $w$, and
	to $D_1$ if additionally, $\mathrm{open}(w)=0$. As an extension, we define
	$\mathrm{open}':\mathbb{N}_0^2\to\mathbb{N}_0$ as $(n,m)\mapsto n-m$.
	Then clearly, $\mathrm{open}(w)=\mathrm{open}'(\Psi(w))$ for each word $w\in\{0,1\}^*$
	(recall that $\Psi(w)=(|w|_0,|w|_1)$ is the Parikh image of $w$)
	and the image under $\mathrm{open}'$ of a linear set $\{(n_0,m_0)+(n_1,m_1)\cdot t_1+\ldots+(n_k,m_k)\cdot t_k:t_1,\ldots,t_k\geq 0\}\subseteq\mathbb{N}_0^2$ is the linear (thus ultimately periodic)
	set  $\left\{(n_0-m_0)+\mathop\sum\limits_{i=1}^k(n_i-m_i)\cdot t_i:t_1\ldots,t_k\geq 0\right\}\subseteq\mathbb{N}_0$. Hence, $\mathrm{open}(L)$ is an ultimately periodic
	set for any context-free language $L\subseteq\{0,1\}^*$, in particular, for $D_1$,
	$\mathbf{Pref}(D_1)$, $\mathbf{Suf}(D_1)$, their intersections with regular languages,
	and finite unions and products of such languages.
	
	Similarly, let us define the \emph{closing depth} of a word $w\in\{0,1\}^*$ as
	$\mathrm{close}(w)=|w|_1-|w|_0$. Then, a word $w$ belongs to $\mathbf{Suf}(D_1)$
	if and only if $\mathrm{close}(w')\geq 0$ for each suffix $w'$ of $w$, and
	belongs to $D_1$ if and only if additionally $\mathrm{close}(w)=0$.
	Again, we define $\mathrm{close'}(n,m)=m-n$. We get also that for any context-free
	language $L\subseteq\{0,1\}^*$, $\mathrm{close}(L)\subseteq\mathbb{N}_0$ is ultimately periodic.
	
	Given a transducer $M=(Q,\Sigma,\Delta,q_0,F,\mu)$,
	we associate to each state $q\in Q$ the following sets $N_{-}(q), N_{+}(q)$ and $N(q)\subseteq\mathbb{N}_0$
	of integers:
	\begin{itemize}
		\item $n\in N_{-}(q)$ if and only if there exists some $u\in\mathbf{Pref}(D_1)$ with $\mathrm{open}(u)=n$
		and $q_0\mathop\leadsto\limits^uq$
		\item $n\in N_{+}(q)$ if and only if there exists some $v\in\mathbf{Suf}(D_1)$ with $\mathrm{close}(v)=n$
		and $q\mathop\leadsto\limits^vq_f$ for some $q_f\in F$
		\item and $N(q)=N_{-}(q)\cap N_{+}(q)$.
	\end{itemize}
	Then, e.g. $n\in N(q)$ if and only if there exists at least some successful computation path in $M$
	reading in some word $uv\in D_1$ for which after reading $u$ in, the path is in $q$ and there are
	exactly $n$ open parentheses at that instant. 
	
	\begin{proposition}
	\label{prop-nq-ultimately-periodic}
		For each state $q$ of a transducer $M$, the sets $N(q)$, $N_{-}(q)$ and $N_{+}(q)$ are ultimately periodic.
	\end{proposition}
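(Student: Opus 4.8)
The plan is to exploit the decomposition $N(q)=N_{-}(q)\cap N_{+}(q)$ already recorded above, to show that each of $N_{-}(q)$ and $N_{+}(q)$ is ultimately periodic, and then to conclude using the (easy) fact that ultimately periodic subsets of $\mathbb{N}_0$ are closed under intersection.

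First I would treat $N_{-}(q)$. By definition $N_{-}(q)=\{\mathrm{open}(u):u\in L_{-}\}$, where $L_{-}=\{u\in\mathbf{Pref}(D_1):q\in q_0u\}$. The set $\{u\in\{0,1\}^*:q\in q_0u\}$ is regular: it is recognized by the finite automaton underlying $M$ with $q_0$ as start state, $q$ as the only accepting state and the output components discarded. Since $\mathbf{Pref}(D_1)$ is context-free (the prefix closure of a context-free language is context-free) and context-free languages are closed under intersection with regular languages, $L_{-}$ is context-free. Now every word of $L_{-}$ lies in $\mathbf{Pref}(D_1)$, hence has non-negative opening depth, so $\mathrm{open}(L_{-})\subseteq\mathbb{N}_0$; by the observation made above that $\mathrm{open}(L)$ is ultimately periodic for every context-free $L\subseteq\{0,1\}^*$ (it is the image of the semilinear Parikh image $\Psi(L)$ under the affine map $\mathrm{open}'$), we conclude that $N_{-}(q)$ is ultimately periodic. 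Symmetrically, $N_{+}(q)=\{\mathrm{close}(v):v\in L_{+}\}$ with $L_{+}=\{v\in\mathbf{Suf}(D_1):qv\cap F\neq\emptyset\}$; here $\{v:qv\cap F\neq\emptyset\}$ is regular and $\mathbf{Suf}(D_1)$ is context-free (it is the reversal of the prefix closure of the reversal of $D_1$), so $L_{+}$ is context-free, $\mathrm{close}(L_{+})\subseteq\mathbb{N}_0$, and $N_{+}(q)$ is ultimately periodic.

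Finally, if $A,B\subseteq\mathbb{N}_0$ are ultimately periodic, pick a common threshold $n_0$ beyond which $A$ is periodic of period $p$ and $B$ is periodic of period $r$; then above $n_0$ the set $A\cap B$ is periodic of period $\mathrm{lcm}(p,r)$, while below $n_0$ it is finite, so $A\cap B$ is ultimately periodic. (Equivalently one may invoke that in one dimension the ultimately periodic sets are exactly the semilinear ones and these form a Boolean algebra.) Applying this with $A=N_{-}(q)$ and $B=N_{+}(q)$ shows that $N(q)$ is ultimately periodic.

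I do not expect a real obstacle here. The only points needing care are: confirming that the auxiliary languages $L_{-}$ and $L_{+}$ are context-free, which is just the standard closure of the context-free class under prefix/suffix and under intersection with regular sets; and making sure that $\mathrm{open}$ (resp.\ $\mathrm{close}$) is applied only to languages all of whose words have non-negative opening (resp.\ closing) depth, so that the earlier ultimate-periodicity remark applies directly with image in $\mathbb{N}_0$ --- and this is immediate from the definitions of $\mathbf{Pref}(D_1)$ and $\mathbf{Suf}(D_1)$.
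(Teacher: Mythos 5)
Your proposal is correct and follows essentially the same route as the paper: write $N(q)=N_{-}(q)\cap N_{+}(q)$, observe that $N_{-}(q)$ (resp.\ $N_{+}(q)$) is the image under $\mathrm{open}$ (resp.\ $\mathrm{close}$) of the context-free language obtained by intersecting $\mathbf{Pref}(D_1)$ (resp.\ $\mathbf{Suf}(D_1)$) with a regular language, hence ultimately periodic, and conclude by closure of ultimately periodic sets under intersection. You merely spell out a few details (context-freeness of the auxiliary languages, the lcm argument for intersection) that the paper delegates to citations.
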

	\begin{proof}
		As $N_{-}(q)=\mathrm{open}(\{u\in\mathbf{Pref}(D_1):q\in q_0u\})$ and this
		language is the intersection of the context-free language $\mathbf{Pref}(D_1)$
		and the regular language $\{u\in\{0,1\}^*:q_0\mathop\leadsto\limits^uq\}$, thus is context-free
		as well, we have that $N_{-}(q)$ is ultimately periodic.
		
		Similarly, $N_{+}(q)$ is ultimately periodic as well. As the intersection of
		finitely many ultimately periodic sets is ultimately periodic~\cite{Matos94periodicsets},
		so is $N(q)$.
	\end{proof}
	For an example for a transducer (without the output function as that does not play a role in the
	sets $N(q)$) and the sets $N(q)$ see Figure~\ref{fig-nq}. The reader is encouraged to verify some of
	these sets, e.g. for $N_+(q_1)$ we have that the words accepted from $q_1$ are the members of the 
	language $(000+01)^*0(1(11)^*+11)~\cap~\mathbf{Suf}(D_1)$ on which if we apply the $\mathrm{close}$ function we get the 
	nonnegative numbers belonging to the set $\{-3t_1-1+1+2t_2:t_1,t_2\geq 0\}~\cup~\{-3t_1-1+2:t_1\geq 0\}$,
	that is, $\{2t_2-3t_1:t_1,t_2\geq 0,2t_2\geq 3t_1\}~\cup~\{1\}$ which in turn is simply $\mathbb{N}_0$, or
	$\{t:t\geq 0\}$ as each nonnegative integer $k$ can be written as either $k=2\cdot t_2-3\cdot 0$ if $k$ is even
	and as $k=2t_1-3\cdot 1$ if $k$ is odd.
		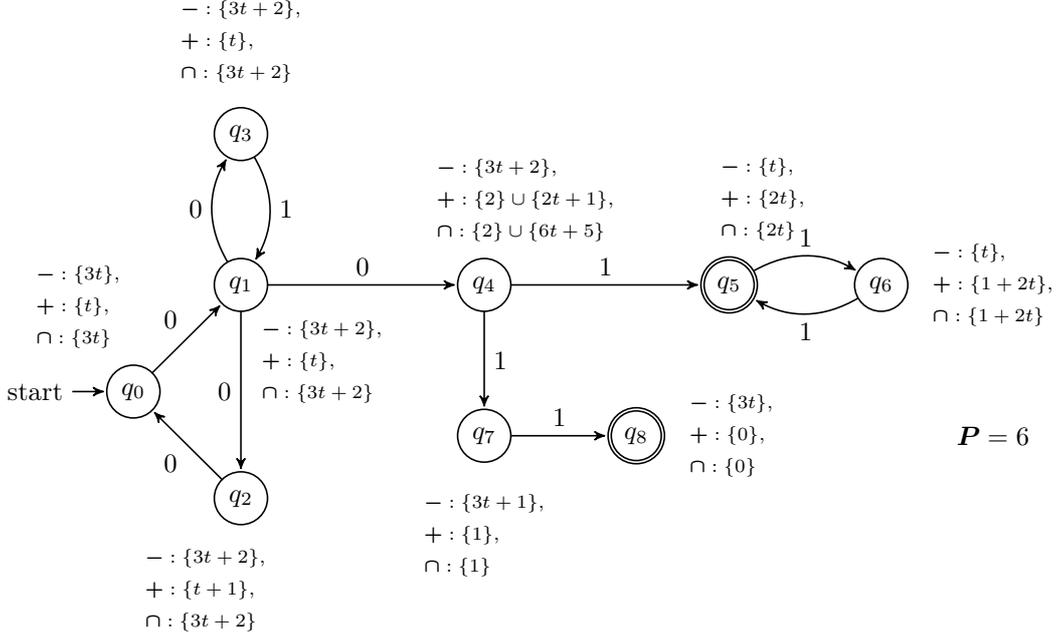
\begin{figure}
		\begin{center}
			\begin{tikzpicture}[-,>=stealth',shorten >=1pt,auto,node distance=2cm, semithick, every text node part/.style={align=left}]
			\tikzstyle{every state}=[draw=black,text=black, inner sep = 0cm, outer sep = 0cm, minimum size = 0.7 cm]
			\node[initial,state] (q0)               {$q_0$};
			\node[state]         (q1) [above right of=q0] {$q_1$};
			\node[state]         (q2) [below right of=q0] {$q_2$};
			\node[state]         (q3) [above of=q1] {$q_3$};
			\node[state]         (q4) [right = 2.5cm of q1]  {$q_4$};
			\node[state,accepting]         (q5) [right = 2.5cm of q4] {$q_5$};
			\node[state]   (q6) [right of=q5]       {$q_6$};
			\node[state]         (q7) [below of=q4]       {$q_7$};
			\node[state,accepting]   (q8) [right of=q7]       {$q_8$};
			
			\path[->]
			(q0) edge node {$0$} (q1)
			(q1) edge node[left] {$0$} (q2)
			(q2) edge node {$0$} (q0)
			(q1) edge[bend left=30] node {$0$} (q3)
			(q3) edge[bend left=30] node {$1$} (q1)
			(q1) edge node {$0$} (q4)
			(q4) edge node {$1$} (q5)
			(q4) edge node {$1$} (q7)
			(q7) edge node {$1$} (q8)
			(q5) edge[bend left=30] node {$1$} (q6)
			(q6) edge[bend left=30] node {$1$} (q5)
			;
			
			\node[rectangle, draw=none] (sq0) [above left = 0.2cm and -0.2cm of q0] { \scriptsize$\boldsymbol{-}: \{3t\}$,\\ \scriptsize  $\boldsymbol{+}:\{t\},$\\ \scriptsize $\boldsymbol{\cap}: \{3t\}$ };
			\node[rectangle, draw=none] (sq1) [below right =0.1cm and -0.1cm of q1] { \scriptsize$\boldsymbol{-}: \{3t+2\}$,\\ \scriptsize  $\boldsymbol{+}: \{t\},$\\ \scriptsize $\boldsymbol{\cap}: \{3t+2\}$ };
			\node[rectangle, draw=none] (sq2) [below left=0.3cm and -0.7cm of q2] { \scriptsize$\boldsymbol{-}: \{3t+2\}$,\\ \scriptsize  $\boldsymbol{+}:\{t+1\},$\\ \scriptsize $\boldsymbol{\cap}: \{3t+2\}$ };
			\node[rectangle, draw=none] (sq3) [above = 0.2cm of q3] { \scriptsize$\boldsymbol{-}: \{3t+2\}$,\\ \scriptsize  $\boldsymbol{+}:\{t\},$\\ \scriptsize $\boldsymbol{\cap}: \{3t+2\}$ };
			\node[rectangle, draw=none] (sq4) [above right = 0.2cm and -1.0cm of q4] { \scriptsize $\boldsymbol{-}: \{3t+2\}$,\\ \scriptsize  $\boldsymbol{+}:\{2\}\cup\{2t+1\},$\\ \scriptsize $\boldsymbol{\cap}: \{2\}\cup\{6t+5\}$ };
			\node[rectangle, draw=none] (sq5) [above right = 0.2cm and -0.5cm of q5] { \scriptsize$\boldsymbol{-}: \{t\}$,\\ \scriptsize  $\boldsymbol{+}:\{2t\},$\\ \scriptsize $\boldsymbol{\cap}: \{2t\}$ };
			\node[rectangle, draw=none] (sq6) [right = 0.2cm of q6] { \scriptsize$\boldsymbol{-}: \{t\}$,\\ \scriptsize  $\boldsymbol{+}:\{1+2t\},$\\ \scriptsize $\boldsymbol{\cap}: \{1+2t\}$ };
			\node[rectangle, draw=none] (sq7) [below = 0.3cm of q7] { \scriptsize$\boldsymbol{-}: \{3t+1\}$,\\ \scriptsize  $\boldsymbol{+}:\{1\},$\\ \scriptsize $\boldsymbol{\cap}: \{1\}$ };
			\node[rectangle, draw=none] (sq8) [right =0.2cm of q8] { \scriptsize$\boldsymbol{-}: \{3t\}$,\\ \scriptsize  $\boldsymbol{+}:\{0\},$\\ \scriptsize $\boldsymbol{\cap}: \{0\}$ };
			
			 \node (p) [below of=sq6] {$\boldsymbol{P}=6$};
			\end{tikzpicture}\end{center}
			\caption{The sets $N_-(q)$, $N_+(q)$ and $N(q)$, denoted by $-$, $+$ and $\cap$ respectively.}
			\label{fig-nq}
	\end{figure}

	\begin{proposition}
	\label{prop-period-and-tau-exist}
		For any transducer $M$, there exists some integer $P>1$, called a \emph{period} of $M$
		and for each state $q$ of $M$,
		some subset $\tau(q)$ of $\{0,\ldots,2P-1\}$, called the \emph{type} of $q$
		such that
		\[N(q)=\bigl(\tau(q)\cap\{0,\ldots,P-1\}\bigr)~\cup~\{n\in\mathbb{N}:~n\geq P,n\equiv r~\mathrm{mod}~P\hbox{ for some }r\geq P,r\in\tau(q)\}.\]
	\end{proposition}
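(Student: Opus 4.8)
The plan is to leverage Proposition~\ref{prop-nq-ultimately-periodic}, which already gives that each individual set $N(q)$ is ultimately periodic, and then to repackage the finitely many sets $\{N(q):q\in Q\}$ using a single, sufficiently large common period. First I would, for each state $q$, write $N(q)$ as a finite union of periodic sets and extract from this decomposition a period $p_q\geq 1$ and a threshold $t_q\geq 0$ with the property that $n\in N(q)$ if and only if $n+p_q\in N(q)$, for every $n\geq t_q$ (the lcm of the periods of the summands and the maximum of their offsets do the job; if $N(q)$ is empty any choice works). Since $M$ has only finitely many states, I would then fix $P$ to be any common multiple of all the $p_q$ that additionally satisfies $P>1$ and $P\geq t_q$ for every $q$.

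The next step is the routine observation that passing to a multiple of a period, and enlarging the threshold, both preserve ultimate periodicity: iterating $n\in N(q)\iff n+p_q\in N(q)$ for $n\geq t_q$ yields $n\in N(q)\iff n+jp_q\in N(q)$ for all $n\geq t_q$ and all $j\geq 0$, and $P$ is one such multiple of $p_q$. Consequently every $N(q)$ has period $P$, and since $P\geq t_q$ we may take $P$ itself as the threshold; in particular, for all $n,m\geq P$ with $n\equiv m\pmod P$ we get $n\in N(q)\iff m\in N(q)$.

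With $P$ in hand I would simply set $\tau(q):=N(q)\cap\{0,\ldots,2P-1\}$, which is manifestly a subset of $\{0,\ldots,2P-1\}$, and verify the identity by splitting $N(q)$ into the part below $P$ and the part at least $P$. The part below $P$ is literally $\tau(q)\cap\{0,\ldots,P-1\}$. For the tail, given $n\geq P$ put $r=(n\bmod P)+P$; then $P\leq r\leq 2P-1$, $r\equiv n\pmod P$ and $r\geq P$, so by the previous paragraph $n\in N(q)\iff r\in N(q)\iff r\in\tau(q)$. Hence $N(q)\cap\{n:n\geq P\}$ equals $\{n\geq P:n\equiv r\bmod P\text{ for some }r\in\tau(q)\text{ with }r\geq P\}$, which is exactly the second term on the right-hand side of the claimed equation, completing the argument.

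There is no serious obstacle here: the statement is essentially a bookkeeping repackaging of Proposition~\ref{prop-nq-ultimately-periodic}. The only point that deserves care is resisting the temptation to use two separate gadgets for the two roles $\tau(q)$ must play — recording the finite exceptional membership on $\{0,\ldots,P-1\}$, and recording which residue classes modulo $P$ are eventually absorbed into $N(q)$ — since the whole reason for inflating $P$ past every threshold $t_q$ (and making it a common multiple of the periods) is precisely that the single set $N(q)\cap\{0,\ldots,2P-1\}$ encodes both of these at once.
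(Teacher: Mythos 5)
Your overall strategy is exactly the paper's: take $P$ to be a common multiple of the periods appearing in the ultimately periodic decompositions of the sets $N(q)$, large enough to dominate the offsets, set $\tau(q)=N(q)\cap\{0,\ldots,2P-1\}$, and verify the identity by splitting $N(q)$ at $P$. However, there is a concrete off-by-one error in the step you describe as routine. You claim that taking $p_q$ to be the lcm of the periods and $t_q$ the \emph{maximum} of the offsets gives $n\in N(q)\iff n+p_q\in N(q)$ for every $n\geq t_q$, and you then only require $P\geq t_q$. This is false at $n=t_q$ when the maximal offset is contributed by a summand whose residue class does not persist. Take $N(q)=\{4\}\cup\{1+2t:t\geq 0\}$: then $p_q=2$, $t_q=4$, and $4\in N(q)$ while $6\notin N(q)$. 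With $P=4$ (a multiple of $2$, $P>1$, $P\geq t_q$) your asserted equivalence ``$n\in N(q)\iff m\in N(q)$ for $n,m\geq P$ with $n\equiv m\bmod P$'' fails ($n=4$, $m=8$), and the final identity fails too: $\tau(q)=\{1,3,4,5,7\}$, so the right-hand side contains $8$ and $12$, which are not in $N(q)$.

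The repair is one character long but essential: you must require $P$ to be \emph{strictly} larger than every offset (equivalently, define $t_q$ as one more than the maximum offset), which is precisely what the paper does by choosing $P$ ``larger than all the remainders.'' Once $P>r_i$ for every offset $r_i$, any $n\geq P$ lying in a summand $\{r_i+p_it:t\geq 0\}$ forces $t\geq 1$ and $p_i>0$, and your two-way periodicity argument goes through verbatim, as does the rest of your verification. So the proposal is the same proof as the paper's modulo this boundary condition, which you need to state correctly for the argument to be sound.
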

	\begin{proof}
		By Proposition~\ref{prop-nq-ultimately-periodic}, each set $N(q)$ is ultimately periodic,
		that is, a finite union of sets of the form $\{r+p\cdot t:t\geq 0\}$ for some constants
		$r,p\geq 0$ (called the remainder and the period -- the case $p=0$ defines a singleton set).
		Let $P$ be the least integer which is a multiple of each nonzero period and larger than
		all the remainders and is also at least two.
		
		We claim that $X(q)=\{n:0\leq n\leq 2P-1\}\cap N(q)$ is a good choice for the type of $q$.
		To this end, let $\widehat{X}(q)$ stand for the (ultimately periodic) set 
		\[\bigl(X(q)\cap\{0,\ldots,P-1\}\bigr)~\cup~\mathop\bigcup\limits_{r\in X(q),r\geq P}\{n\geq P:n\equiv r~\mathrm{mod}~P\}.\]
		So we have to show that $N(q)=\widehat{X}(q)$.
		
		First, observe that $\widehat{X}(q)\cap\{0,\ldots,P-1\}~=~N(q)\cap\{0,\ldots,P-1\}$ by the definition of
		$X(q)$ so we have to show that for any integer $n\geq P$, $n\in\widehat{X}(q)$ if and only if $n\in N(q)$.
		Let us write $N(q)=\mathop\bigcup\limits_{i\in[k]}\{r_i+p_i\cdot t:t\geq 0\}$
		
		And indeed, for $n\geq P$ (and thus $n\geq r_i,p_i$ for each $i\in[k]$) we have
		\begin{align*}
		n\in\widehat{X}(q) &\Leftrightarrow n\equiv r~\mathrm{mod}~P\hbox{ for some }r\in X(q),r\geq P\\
		&\Leftrightarrow n\equiv r~\mathrm{mod}~P\hbox{ for some }r\in N(q),P\leq r<2P\\
		&\Leftrightarrow n\equiv r_i+p_i\cdot t~\mathrm{mod}~P\hbox{ for some }i\in[k], 0\leq t\\
		&\Leftrightarrow n\equiv r_i+p_i\cdot t~\mathrm{mod}~P\hbox{ for some }i\in[k], 0\leq t<P/p_i\\
		&\Leftrightarrow n\equiv r_i~\mathrm{mod}~p_i,n\geq r_i\hbox{ for some }i\in[k]\\
		&\Leftrightarrow n\in N(q).
		\end{align*}
	\end{proof}

	{\textbf{Now we create a transucer $M'$ from $M$ by creating copies of each state.}}
	The states of $M'$ will be triples of the form $(q,n,\sigma)$ with $q\in Q$, $n\in\tau(q)$ and $\sigma\in\{\equiv,\uparrow,\downarrow\}$.
	
	Let $P$ be a period of $M$. From the state $q$ of $M$, we will create states $(q,n,\equiv)$ for each $P\leq n\in\tau(q)$
	and two states, $(q,n,\uparrow)$ and $(q,n,\downarrow)$ for each $n\in\tau(q)$ with $n<P$.
	Observe that since $q_0\mathop\leadsto\limits^wq_f$ for some $w\in D_1$ (otherwise $L(M)$ is empty)
	and $q_f\in F$,
	ˇwe have $0\in\tau(q_0)$.
	In $M'$, let $(q_0,0,\uparrow)$ be the initial state.
	Also, if $q_f\in F$, then we can assume that there exists some word $w\in D_1$ with
	$q_0\mathop\leadsto\limits^wq_f$ (otherwise we can remove $q_f$ from $F$, the resulting transducer will
	be equivalent with $M$), and so $0\in N(q_f)$ as well. So let $\{(q_f,0,\downarrow):q_f\in F\}$ be the
	(nonempty) set of accepting states in $M'$.
	
	We define the transitions of $M'$ as follows: let $((p,n,\sigma_1),a,(q,m,\sigma_2))\in\Delta'$
	if and only if $(p,a,q)\in\Delta$ and one of the following conditions holds:
	\begin{enumerate}
		\item[i)] $n+1=m<P$, $\sigma_1=\sigma_2$ and $a=0$
		\item[ii)] $n-1=m$, $m<P$, $\sigma_2\in\{\sigma_1,\downarrow\}$ and $a=1$
		\item[iii)] $n+1\equiv m~\mathrm{mod}~P$, $m\geq P$, $n\geq P-1$, $a=0$, $\sigma_2=\equiv$
		and $\sigma_1\neq\downarrow$
		\item[iv)] $n-1\equiv m~\mathrm{mod}~P$, $n\geq P$, $m\geq P-1$, $a=1$, $\sigma_1=\equiv$
		and $\sigma_2\neq\uparrow$.
	\end{enumerate}
	Moreover, for $((p,n),a,(q,m))\in\Delta'$, let $\mu'((p,n),a,(q,m))=\mu(p,a,q)$.
	Finally, if there is any non-accessible or non-coaccessible state in $M'$, then let us drop it.
	Figure~\ref{fig-mprime} shows a part of the transducer $M'$ constructed from the transducer $M$ of Figure~\ref{fig-nq}
	with some states missing and without the output function, to maintain readability of the transition diagram.
	
	The idea is that when $M'$ reads some input word, then for a while it uses states labeled by $\uparrow$, then
	if for the currently read prefix the opening depth reaches $P$, then from that point it uses states labeled by
	$\equiv$, then, after reading in the longest prefix with opening depth at least $P$ it switches to states
	labeled by $\downarrow$. In the $\uparrow$ and $\downarrow$ states, the exact opening depth is maintained while
	in the $\equiv$ states it's maintained only up to modulo $P$. (During the switch from an $\equiv$ state to a $\downarrow$
	state, nondeterminism is used to guess the end of the longest prefix and this guess is then checked against by the
	$\downarrow$ states.) Finally, if the depth of the word never reaches $P$, then the transducer switches at some point
	from an $\uparrow$-state to a $\downarrow$ state by a transition of type ii). Most of these latter
	transitions are missing intentionally from the diagram of $M'$ of Figure~\ref{fig-mprime}.
		
	\begin{figure}
	\begin{center}
	\begin{tikzpicture}[scale=0.8, transform shape, -,>=stealth',shorten >=1pt,auto,node distance=1cm, semithick, every text node part/.style={align=left}]
	\tikzstyle{every state}=[draw=black,text=black, inner sep = 0cm, outer sep = 0cm, minimum size = 0.6 cm, ellipse]
	\node[initial,state] (q0-0)                 {$q_0,0,\uparrow$};
	\node[state]         (q0-3) [above of=q0-0] {$q_0, 3, \uparrow$};
	\node[state]         (q0-6) [above of=q0-3] {$q_0, 6, \equiv$};
	\node[state]         (q0-9) [above of=q0-6] {$q_0, 9, \equiv$};
	
	\node[state] 		 (q1-1) [below right = 2cm and 2cm of q0-0]  {$q_1,1,\uparrow$};
	\node[state]         (q1-4) [above of=q1-1] {$q_1, 4, \uparrow$};
	\node[state]         (q1-7) [above of=q1-4] {$q_1, 7, \equiv$};
	\node[state]         (q1-10) [above of=q1-7] {$q_1, 10, \equiv$};
	
	\node[state] 		 (q2-2) [below left = 5cm and 1cm of q0-0]  {$q_2,2,\uparrow$};
	\node[state]         (q2-5) [above of=q2-2] {$q_2, 5, \uparrow$};
	\node[state]         (q2-8) [above of=q2-5] {$q_2, 8, \equiv$};
	\node[state]         (q2-11) [above of=q2-8] {$q_2, 11, \equiv$};
	
	\node[state] 		 (q3-2) [right = 2cm of q1-1]  {$q_3, 2,\uparrow$};
	\node[state]         (q3-5) [above of=q3-2] {$q_3, 5, \uparrow$};
	\node[state]         (q3-8) [above of=q3-5] {$q_3, 8, \equiv$};
	\node[state]         (q3-11) [above of=q3-8] {$q_3, 11, \equiv$};
	
	\node[state] 		 (q4-2) [below right = 4cm and -1cm of q3-2]  {$q_4, 2,\uparrow$};
	\node[state]         (q4-5) [above of=q4-2] {$q_4, 5, \uparrow$};
	\node[state]         (q4-11) [above of=q4-5] {$q_4, 11, \equiv$};
	
	\node[state,accepting] 		 (q5-0) [below right =1cm and 2cm of q4-2]  {$q_5, 0,\downarrow$};
	\node[state]         (q5-2) [above of=q5-0] {$q_5, 2, \downarrow$};
	\node[state]         (q5-4) [above of=q5-2] {$q_5, 4, \downarrow$};
	\node[state]         (q5-6) [above of=q5-4] {$q_5, 6, \equiv$};
	\node[state]         (q5-8) [above of=q5-6] {$q_5, 8, \equiv$};
	\node[state]         (q5-10) [above of=q5-8] {$q_5, 10, \equiv$};
	
	\node[state] 		 (q6-1) [right = 2cm of q5-0]  {$q_6, 1,\downarrow$};
	\node[state]         (q6-3) [above of=q6-1] {$q_6, 3, \downarrow$};
	\node[state]         (q6-5) [above of=q6-3] {$q_6, 5, \downarrow$};
	\node[state]         (q6-7) [above of=q6-5] {$q_6, 7, \equiv$};
	\node[state]         (q6-9) [above of=q6-7] {$q_6, 9, \equiv$};
	\node[state]         (q6-11) [above of=q6-9] {$q_6, 11, \equiv$};
	
	\node[state] 		 (q7-1) [below = 1.1cm of q4-2]  {$q_7, 1,\uparrow$};
	
	\node[state,accepting] 		 (q8-0) [below = 1.1cm of q7-1]  {$q_8, 0,\downarrow$};

	\path[->]
	(q0-0.east) edge node[near start] {\scriptsize $0$} (q1-1.west)
	(q0-3.east) edge node[near start] {\scriptsize $0$} (q1-4.west)
	(q0-6.east) edge node[near start] {\scriptsize $0$} (q1-7.west)
	(q0-9.east) edge node[near start] {\scriptsize $0$} (q1-10.west)
	
	(q1-1.west) edge[bend left = 20] node[near end,below] {\scriptsize $0$} (q2-2.east)
	(q1-4.west) edge[bend left = 20] node[near end,below] {\scriptsize $0$} (q2-5.east)
	(q1-7.west) edge[bend left = 20] node[near end,below] {\scriptsize $0$} (q2-8.east)
	(q1-10.west) edge[bend left = 20] node[near end,below] {\scriptsize $0$} (q2-11.east)
	
	(q1-1) edge[bend left = 10] node[near start,above] {\scriptsize $0$} (q3-2)
	(q1-4) edge[bend left = 10] node[near start,above] {\scriptsize $0$} (q3-5)
	(q1-7) edge[bend left = 10] node[near start,above] {\scriptsize $0$} (q3-8)
	(q1-10) edge[bend left = 10] node[near start,above] {\scriptsize $0$} (q3-11)

	(q1-1.south) edge[bend left = -20] node[near end,left] {\scriptsize $0$} (q4-2.west)
	(q1-4.east) edge[in = 150, out=20] node[near end,left] {\scriptsize $0$} (q4-5.west)
	(q1-10.north) edge[bend left = 80, in=90, out=90,looseness=1.95] node[near end,left] {\scriptsize $0$} (q4-11.north)
	
	(q2-2.west) edge[in = 180, out=180,looseness=1.3] node[near end] {\scriptsize $0$} (q0-3.west)
	(q2-5.west) edge[in = 180, out=180,looseness=1.3] node[near end] {\scriptsize $0$} (q0-6.west)
	(q2-8.west) edge[in = 180, out=180,looseness=1.3] node[near end] {\scriptsize $0$} (q0-9.west)
	(q2-11.north) edge[in = 230, out=110,looseness=1.3] node[near start, right] {\scriptsize $0$} (q0-6.west)
	
	(q3-2) edge[bend left = 10] node[near start,below] {\scriptsize $1$} (q1-1)
	(q3-5) edge[bend left = 10] node[near start,below] {\scriptsize $1$} (q1-4)
	(q3-8) edge[bend left = 10] node[near start,below] {\scriptsize $1$} (q1-7)
	(q3-11) edge[bend left = 10] node[near start,below] {\scriptsize $1$} (q1-10)
	
	(q4-2) edge node[right] {\scriptsize $1$} (q7-1)
	(q4-5) edge[bend left = 10] node[below] {\scriptsize $1$} (q5-4)
	(q4-11) edge[bend left = 10] node[below] {\scriptsize $1$} (q5-10)

	(q5-2) edge node[below] {\scriptsize $1$} (q6-1)
	(q5-4) edge node[below] {\scriptsize $1$} (q6-3)
	(q5-6) edge node[below] {\scriptsize $1$} (q6-5)
	(q5-8) edge node[below] {\scriptsize $1$} (q6-7)
	(q5-10) edge node[below] {\scriptsize $1$} (q6-9)
	
	(q5-6.west) edge[out=120,in=120,looseness=1.7] node[above] {\scriptsize $1$} (q6-11)
	
	(q6-1) edge node[below right] {\scriptsize $1$} (q5-0)
	(q6-3) edge node[below right] {\scriptsize $1$} (q5-2)
	(q6-5) edge node[below right] {\scriptsize $1$} (q5-4)
	(q6-7) edge node[below right] {\scriptsize $1$} (q5-6)
	(q6-9) edge node[below right] {\scriptsize $1$} (q5-8)
	(q6-11) edge node[below right] {\scriptsize $1$} (q5-10)

	(q7-1) edge node[right] {\scriptsize $1$} (q8-0)
	;
	\end{tikzpicture}
	\end{center}
	\caption{The automaton $M'$.}
	\label{fig-mprime}
	\end{figure}
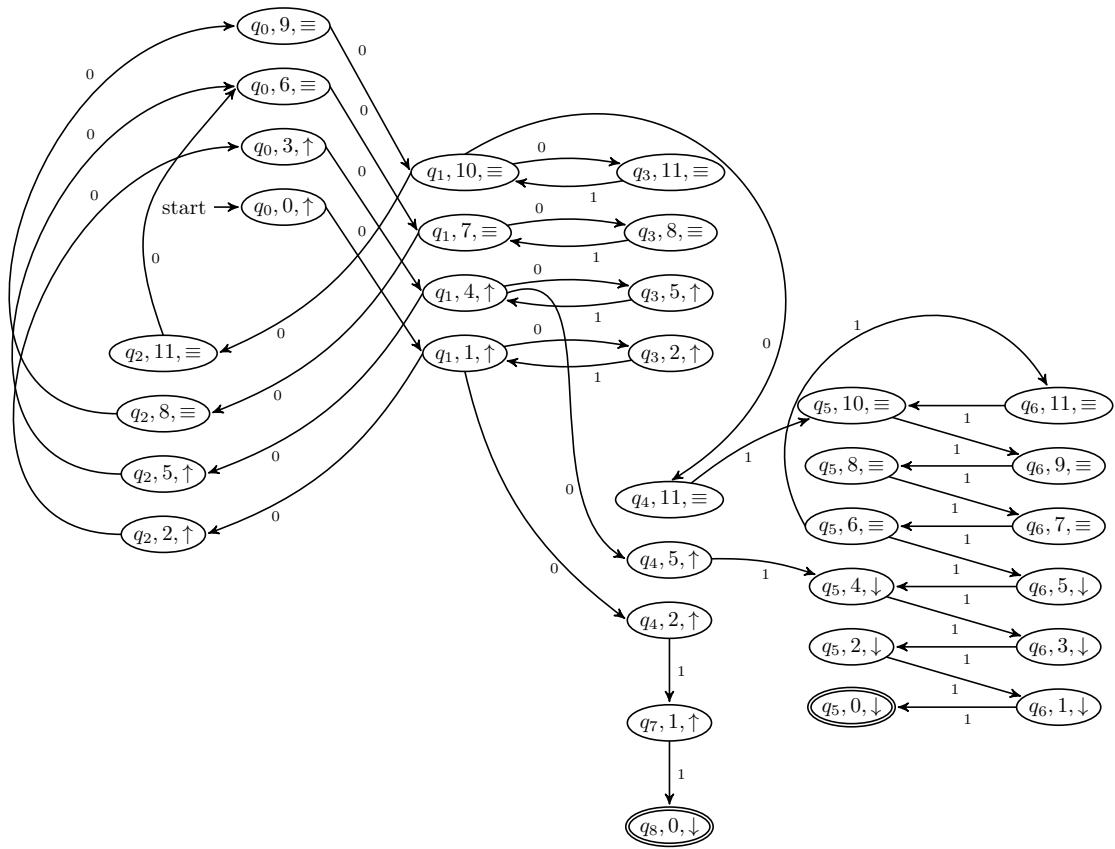
	
	\begin{proposition}
	\label{prop-consistent-runs}
		For each word $u=a_1\ldots a_n\in D_1$ and run $q_0\mathop{\longrightarrow}\limits^{a_1/R_1}q_1\mathop{\longrightarrow}\limits^{a_2/R_2}\ldots\mathop{\longrightarrow}\limits^{a_n/R_n}q_n$ in $M$ with $q_n\in F$ there is a run		
		$(q_0,0,\uparrow)\mathop{\longrightarrow}\limits^{a_1/R_1}(q_1,t_1,\sigma_1)\mathop{\longrightarrow}\limits^{a_2/R_2}\ldots\mathop{\longrightarrow}\limits^{a_n/R_n}(q_n,t_n,\sigma_n)$ in $M'$ with $(q_n,t_n,\sigma_n)\in F\times\{0\}\times\{\downarrow\}$ in $M'$.
	\end{proposition}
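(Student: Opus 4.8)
The plan is to lift the given $M$-run to an $M'$-run on the \emph{same} word $u$, deciding for each prefix its phase $\sigma_i$ and second component $t_i$ from the opening-depth profile. Write $d_i=\mathrm{open}(a_1\cdots a_i)$; since $u\in D_1$ we have $d_0=d_n=0$, all $d_i\ge 0$, and $d_i-d_{i-1}=\pm 1$ according to whether $a_i$ is $0$ or $1$. The first thing I would record is that $d_i\in N(q_i)$ for every $i$: the prefix $a_1\cdots a_i$ reaches $q_i$ with opening depth $d_i$, the suffix $a_{i+1}\cdots a_n$ leads from $q_i$ into $F$, and the two halves concatenate to $u\in D_1$, so the pair $(a_1\cdots a_i,\,a_{i+1}\cdots a_n)$ witnesses $d_i\in N(q_i)$ directly from the definition of $N(q_i)$.

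Fix a period $P$ of $M$ and let $I=\{i:d_i\ge P\}$. If $I=\emptyset$ (this includes $n=0$, i.e. $u=\varepsilon$ with $q_0\in F$, which is trivial), then every $d_i<P$, so $d_i\in\tau(q_i)\cap\{0,\dots,P-1\}$, and the run $(q_0,0,\uparrow)\to\cdots\to(q_{n-1},d_{n-1},\uparrow)\to(q_n,0,\downarrow)$ works: all but the last transition have type i)/ii) inside the $\uparrow$ phase, and the last one (note $a_n=1$, $d_{n-1}=1$) is the type ii) transition switching to $\downarrow$. If $I\neq\emptyset$, put $i_1=\min I$, $i_2=\max I$; since $d$ moves in unit steps we get $d_{i_1-1}=P-1$, $a_{i_1}=0$, $d_{i_1}=P$, and symmetrically $d_{i_2}=P$, $a_{i_2+1}=1$, $d_{i_2+1}=P-1$, with $0<i_1\le i_2<n$. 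Define $\sigma_i=\uparrow$ for $i<i_1$, $\sigma_i=\equiv$ for $i_1\le i\le i_2$, $\sigma_i=\downarrow$ for $i>i_2$, and set $t_i=d_i$ in the $\uparrow$/$\downarrow$ phases, $t_i=P+(d_i\bmod P)$ in the $\equiv$ phase; note $t_i\equiv d_i\pmod P$ throughout.

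The crux is checking that each triple $(q_i,t_i,\sigma_i)$ is a genuine state of $M'$, i.e. that $t_i\in\tau(q_i)$. In the $\uparrow$/$\downarrow$ phases $d_i<P$ and $d_i\in N(q_i)$, so this is immediate; in the $\equiv$ phase with $d_i\ge P$ it follows from Proposition~\ref{prop-period-and-tau-exist}, which provides exactly the representative $t_i=P+(d_i\bmod P)$ in $\tau(q_i)\cap\{P,\dots,2P-1\}$. The one delicate case is an $i\in(i_1,i_2)$ with $d_i<P$ --- a ``dip'' back below level $P$ inside the deep region --- where one must show $P+d_i\in N(q_i)$. For this I would use that $q_{i_1}$ and $q_{i_2}$ occur at depth exactly $P\ge P$: since the part of $N(q_{i_1})$ above $P$ is a union of residue classes mod $P$ (Proposition~\ref{prop-period-and-tau-exist}) and $P$ itself lies there, $2P\in N(q_{i_1})$, and likewise $2P\in N(q_{i_2})$. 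Pick $w\in\mathbf{Pref}(D_1)$ with $q_{i_1}\in q_0w$, $\mathrm{open}(w)=2P$; then $w\cdot a_{i_1+1}\cdots a_i$ is still in $\mathbf{Pref}(D_1)$ (along that tail the level equals $P+d_l\ge P$), reaches $q_i$, and has opening $2P+(d_i-P)=P+d_i$, so $P+d_i\in N_{-}(q_i)$; dually, a suffix of $M$ witnessing $2P\in N(q_{i_2})$, prefixed by $a_{i+1}\cdots a_{i_2}$, stays in $\mathbf{Suf}(D_1)$ and gives $P+d_i\in N_{+}(q_i)$. Hence $P+d_i\in N(q_i)$, and since $P\le P+d_i<2P$ this is the required element $t_i$ of $\tau(q_i)$.

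It remains to verify that each consecutive pair of the lifted run is a legal $M'$-transition, which is a short case split on the two phases and on $a_i\in\{0,1\}$: inside a single phase one uses i)/ii) ($\uparrow$, $\downarrow$) or iii)/iv) ($\equiv$); the $\uparrow\to\equiv$ step at $i_1$ is the type iii) transition with $n=P-1$, $m=P$; the $\equiv\to\downarrow$ step at $i_2$ is the type iv) transition with $n=P$, $m=P-1$; in every case the side conditions on the $\sigma$'s and the mod-$P$ congruences hold by construction. The starting triple is the initial state $(q_0,0,\uparrow)$, the ending triple is the accepting state $(q_n,0,\downarrow)$ with $q_n\in F$, and the output labels agree because $\mu'$ copies $\mu$. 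The main obstacle is precisely the dip case of the $\equiv$ phase treated above; everything else is bookkeeping.
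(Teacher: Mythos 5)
Your proof is correct and follows essentially the same route as the paper's: split on whether the opening depth ever reaches $P$, define the phases $\uparrow/\equiv/\downarrow$ by the first and last positions at depth $\geq P$, store the exact depth in the outer phases and $P+(\text{depth}\bmod P)$ in the middle phase, and check each step against the four transition types. The one place where you go beyond the paper is worth noting: the paper's proof only verifies that the consecutive triples satisfy the transition conditions i)--iv), and silently assumes that each triple $(q_i,t_i,\sigma_i)$ is actually a state of $M'$, i.e.\ that $t_i\in\tau(q_i)$. This is immediate when $t_i=d_i\in N(q_i)$, and follows from Proposition~\ref{prop-period-and-tau-exist} when $d_i\geq P$, but in the ``dip'' case ($i$ strictly inside the $\equiv$ phase with $d_i<P$, so $t_i=P+d_i\neq d_i$) it genuinely needs an argument; your construction of witnesses via $2P\in N(q_{i_1})$ and $2P\in N(q_{i_2})$, shifting the run up by one period on each side, correctly supplies $P+d_i\in N_-(q_i)\cap N_+(q_i)$ and closes a gap the paper leaves open.
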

	\begin{proof}
		Let $u=a_1\ldots a_n\in D_1$ be a word and $q_0\mathop{\longrightarrow}\limits^{a_1/R_1}q_1\mathop{\longrightarrow}\limits^{a_2/R_2}\ldots\mathop{\longrightarrow}\limits^{a_n/R_n}q_n$ be a run in $M$ with $q_n\in F$.
		
		There are two cases: either $\mathrm{open}(v)<P$ for each prefix $v$ of $u$, or $\mathrm{open}(v)\geq P$ for
		at least one prefix $v$ of $u$. We construct an accepting run
		$(q_0,0,\uparrow)\mathop{\longrightarrow}\limits^{a_1/R_1}(q_1,t_1,\sigma_1)\mathop{\longrightarrow}\limits^{a_2/R_2}\ldots\mathop{\longrightarrow}\limits^{a_n/R_n}(q_n,t_n,\sigma_n)$
		of $M'$ in both cases.
		\begin{enumerate}
			\item If $\mathrm{open}(v)<P$ for each prefix $v$ of $u$, then let us define $t_i=\mathrm{open}(v)$ for each
			$0\leq i\leq n$, $\sigma_i=\uparrow$ for each $0\leq i<n$ and $\sigma_n=\downarrow$.		
			Then, the first $n-1$ transitions are of type i) and type ii) depending on $a_i$, with $\sigma_1=\sigma_2=\uparrow$, and the last transition is of type ii) with $\sigma_2=\downarrow$,
			since by $u\in D_1$ we get $a_n=1$. Thus this is indeed an accepting run in $M'$.
			\item If $\mathrm{open}(v)\geq P$ for at least one prefix $v$ of $u$, then let $i_\uparrow\geq 0$
			be the largest index
			so that for each $j\leq i_\uparrow$, $\mathrm{open}(a_1\ldots a_j)<P$ and let $i_\downarrow$ be
			the smallest index so that for each $j\geq i_\downarrow$, $\mathrm{open}(a_1\ldots a_j)<P$.
			These indices exist since $\mathrm{open}(a_1)=1<P$ and $\mathrm{open}(a_1\ldots a_n)=0<P$,
			moreover, $i_\uparrow<i_\downarrow$ since there exists some $i$ with $\mathrm{open}(a_1\ldots a_i)\geq P$
			and all of these $i$s have to fall strictly between $i_\uparrow$ and $i_\downarrow$.
			
			Now let us define
			\begin{align*}
			t_i&=\begin{cases}
			\mathrm{open}(a_1\ldots a_i)&\hbox{if }i\leq i_\uparrow\hbox{ or }i\geq i_\downarrow\\
			(\mathrm{open}(a_1\ldots a_i)~\mathrm{mod}~P)+P&\hbox{otherwise}
			\end{cases}&
			\sigma_i&=\begin{cases}
			\uparrow&\hbox{if }i\leq i_\uparrow\\
			\equiv&\hbox{if }i_\uparrow<i<i_\downarrow\\
			\downarrow&\hbox{if }i_\downarrow\leq i.
			\end{cases}
			\end{align*}
			We claim that for each $0\leq i<n$, $((q_i,t_i,\sigma_i),a_{i+1},(q_{i+1},t_{i+1},\sigma_{i+1}))$ is
			a transition in $M'$. Indeed: $(q_i,a_{i+1},q_{i+1})$ is a transition of $M$ and
			\begin{itemize}
				\item if $i<i_\uparrow$ and $a_{i+1}=0$, then $t_i=\mathrm{open}(a_1\ldots a_i)$, $t_{i+1}=\mathrm{open}(a_1\ldots a_{i+1})=t_i+1<P$ and $\sigma_1=\sigma_2=\uparrow$,
				thus then the triple is a type i) transition
				\item if $i<i_\uparrow$ and $a_{i+1}=1$, then $t_i=\mathrm{open}(a_1\ldots a_i)$, $t_{i+1}=\mathrm{open}(a_1\ldots a_{i+1})=t_i-1$, $t_i<P$ and $\sigma_1=\sigma_2=\uparrow$,
				thus then the triple is a type ii) transition
				\item if $i=i_\uparrow$, then (by the maximality of $i_\uparrow$) $a_{i+1}=0$,
				$\mathrm{open}(a_1\ldots a_i)=t_i=P-1$, $\mathrm{open}(a_1\ldots a_{i+1})=t_{i+1}=P$
				(as $(P~\mathrm{mod}~P)+P=0+P=P$), $\sigma_1=\uparrow$, $\sigma_2=\equiv$ and the triple is 
				a type iii) transition
				\item if $i_\uparrow<i<i_\downarrow-1$ and $a_{i+1}=0$, then $\sigma_i=\sigma_{i+1}=\equiv$,
				$t_i=(\mathrm{open}(a_1\ldots a_i)~\mathrm{mod}~P)+P\geq P$, $t_{i+1}=((\mathrm{open}(a_1\ldots a_i)+1)~\mathrm{mod}~P)+P\geq P$ and the triple is a type iii) transition
				\item if $i_\uparrow<i<i_\downarrow-1$ and $a_{i+1}=1$, then $\sigma_i=\sigma_{i+1}=\equiv$,
				$t_i=(\mathrm{open}(a_1\ldots a_i)~\mathrm{mod}~P)+P\geq P$, $t_{i+1}=((\mathrm{open}(a_1\ldots a_i)-1)~\mathrm{mod}~P)+P\geq P$ and the triple is a type iv) transition
				\item if $i=i_\downarrow-1$, then (by the minimality of $i_\downarrow$)
				$t_i=\mathrm{open}(a_1\ldots a_i)=P$, $a_{i+1}=1$, $t_{i+1}=\mathrm{open}(a_1\ldots a_{i+1})=P-1$,
				$\sigma_i=\equiv$, $\sigma_2=\downarrow$ and the triple is a type iv) transition
				\item if $i_\downarrow\leq i$ and $a_{i+1}=0$, then $t_i=\mathrm{open}(a_1\ldots a_i)$,
				$t_{i+1}=\mathrm{open}(a_1\ldots a_{i+1})=t_i+1<P$ and $\sigma_1=\sigma_2=\downarrow$,
				thus then the triple is a type i) transition
				\item if $i_\downarrow\leq i$ and $a_{i+1}=1$, then $t_i=\mathrm{open}(a_1\ldots a_i)$,
				$t_{i+1}=\mathrm{open}(a_1\ldots a_{i+1})=t_i-1$, $t_i<P$ and $\sigma_1=\sigma_2=\downarrow$,
				thus then the triple is a type ii) transition
			\end{itemize}
		\end{enumerate}
	\end{proof}
Let us call the run $r'$ of $M'$ constructed from a run $r$ of $M$ in the proof of Proposition~\ref{prop-consistent-runs}
the \emph{canonical lifted run} of $r$.
\begin{corollary}
	\label{cor-pm-is-pmprime}
	$L(M)=L(M')$ for the transducers $M$ and $M'$ of Proposition~\ref{prop-consistent-runs}.
	\end{corollary}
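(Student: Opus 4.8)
The plan is to establish, for every word $u\in D_1$, the equality $L(M,u)=L(M',u)$; taking the union over $u\in D_1$ then gives $L(M)=L(M')$ directly from the definitions of $L(M)$ and $L(M')$. The key observation I would use is that the construction of $M'$ only splits states and copies the output function verbatim, namely $\mu'((p,n,\sigma_1),a,(q,m,\sigma_2))=\mu(p,a,q)$ whenever the left side is defined; so, writing $\pi$ for the projection $(q,n,\sigma)\mapsto q$ on states, any run of $M'$ and the run of $M$ obtained by applying $\pi$ to all of its states read exactly the same sequence of output languages.

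For the inclusion $L(M)\subseteq L(M')$ I would simply invoke Proposition~\ref{prop-consistent-runs}: given $w\in L(M)$, pick $u=a_1\dots a_n\in D_1$ with $w\in L(M,u)$, so there is an accepting run of $M$ on $u$ through states $q_0,\dots,q_n$ with $q_n\in F$ and $w$ in the corresponding product $R_1\cdots R_n$ of output languages; the proposition lifts this to a run of $M'$ on $u$ from $(q_0,0,\uparrow)$ to $(q_n,0,\downarrow)$ with $q_n\in F$, which is a final state of $M'$, reading the very same languages $R_1,\dots,R_n$, so that $w\in R_1\cdots R_n\subseteq L(M',u)\subseteq L(M')$. (Pruning the non-accessible and non-coaccessible states of $M'$ does not change $L(M')$, so that last step of the construction can be ignored.)

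For the reverse inclusion $L(M')\subseteq L(M)$ I would argue in the opposite direction: by the definition of $\Delta'$ each transition of $M'$ projects under $\pi$ to a transition of $M$, and $\pi$ sends the initial state $(q_0,0,\uparrow)$ to $q_0$ and every final state $(q_f,0,\downarrow)$ to $q_f\in F$; hence applying $\pi$ to an accepting run of $M'$ on some $u\in D_1$ produces an accepting run of $M$ on $u$ reading the same sequence of output languages, so $L(M',u)\subseteq L(M,u)$. Putting the two inclusions together yields $L(M,u)=L(M',u)$ for each $u\in D_1$, and therefore $L(M)=\bigcup_{u\in D_1}L(M,u)=\bigcup_{u\in D_1}L(M',u)=L(M')$. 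I expect no real obstacle here: the substantive content is already in Proposition~\ref{prop-consistent-runs}, and what remains relies only on the facts that the output function is copied unchanged and that dropping useless states leaves the accepted language intact.
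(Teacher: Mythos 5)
Your proposal is correct and follows essentially the same route as the paper's own proof: the inclusion $L(M)\subseteq L(M')$ is obtained by invoking Proposition~\ref{prop-consistent-runs}, and the reverse inclusion by projecting states via $(q,n,\sigma)\mapsto q$, which preserves transitions and output languages. The extra remarks on per-word equality and on pruning useless states are harmless elaborations of the same argument.
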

\begin{proof}

From Proposition~\ref{prop-consistent-runs} we have $L(M)\subseteq L(M')$.
For the other direction, $L(M')\subseteq L(M)$ also clearly holds
since the mapping $(q,n,\sigma)\mapsto q$ for each $q\in Q$, $n\in\tau(q)$,
$\sigma\in\{\uparrow,\downarrow,\equiv\}$ transforms an accepting run in $M'$
into an accepting run in $M$, with the same labels on the transitions.
\end{proof}
%	Hence, we can consider the automaton $M'$ and call those runs of the form
%	\[(q_0,t_0,\sigma_0)\mathop{\longrightarrow}\limits^{a_1/R_1}(q_1,t_1,\sigma_1)\mathop{\longrightarrow}\limits^{a_2/R_2}\ldots\mathop{\longrightarrow}\limits^{a_n/R_n}(q_n,t_n,\sigma_n)\] of $M'$
%	explained in the construction \emph{consistent}.
%	By Proposition~\ref{prop-consistent-runs}, $L(M)$ is then the union of all the languages
%	$R_1\ldots R_n$ occurring as output sequences on accepting consistent runs of $M'$ on input words belonging to $D_1$.

We will show that the created $M'$ has feasible cycles only. To this end, we first prove a batch of statements
regarding the states of $M'$.
\begin{proposition}
	\label{prop-props-of-states-of-mprime}
	The following all hold for the transducer $M'$ we constructed from $M$:
		\begin{enumerate}
	\item For any state $(q,n,\uparrow)$ of $M'$, (thus $0\leq n<P$), whenever
	$u\in\mathbf{Pref}(D_1)$ is a word with $(q_0,0,\uparrow)\mathop\leadsto\limits^u(q,n,\sigma)$,
	then $\mathrm{open}(u)=n$, moreover, at least one such word exists.
	\item For any $(q,n,\downarrow)$ of $M'$ (thus $0\leq n<P$), whenever
	$v\in\mathbf{Suf}(D_1)$ is a word with $(q,n,\sigma)\mathop\leadsto\limits^v(q_f,0,\downarrow)$ for
	some $q_f\in F$, then $\mathrm{close}(v)=n$, moreover, at least one such word exists.
	\item For any state $(q,n,\equiv)$ of $M'$ (and thus $P\leq n<2P$),
	whenever $u\in\mathbf{Pref}(D_1)$ is a word with $(q_0,0,\uparrow)\mathop\leadsto\limits^u(q,n,\equiv)$,
	then $\mathrm{open}(u)\equiv n~\mathrm{mod}~P$.
	\item For any state $(q,n,\equiv)$ of $M'$ (and thus $P\leq n<2P$),
	whenever $v\in\mathbf{Suf}(D_1)$ is a word with $(q,n,\equiv)\mathop\leadsto\limits^v(q_f,0,\downarrow)$
	for some $q_f\in F$, then $\mathrm{close}(v)\equiv n~\mathrm{mod}~P$.
	\item For any state $(q,n,\equiv)$ of $M'$ and integer $N\geq P$ with $N\equiv n~\mathrm{mod}~P$,
	there exists a word $uv\in D_1$
	such that $\mathrm{open}(u)=\mathrm{close}(v)=N$ and $(q_0,0,\uparrow)\mathop\leadsto\limits^u(q,n,\equiv)\mathop\leadsto\limits^v(q_f,0,\downarrow)$ for
	some $q_f\in F$.
\end{enumerate}
\end{proposition}
\begin{proof}
First observe that each transition either increases the second coordinate modulo $P$ when it's reading
a $0$, or decreases the second coordinate when it's reading a $1$. Thus in particular,
\begin{itemize}
	\item whenever $(q_0,0,\uparrow)\mathop\leadsto\limits^u(q,n,\sigma)$ for some word $u$ and
	state $(q,n,\sigma)$ of $M'$, it holds that $\mathrm{open}(u)\equiv n~\mathrm{mod}~P$,
	\item and whenever $(q,n,\sigma)\mathop\leadsto\limits^v(q_f,0,\downarrow)$ for some word $v$
	and state $(q,n,\sigma)$ of $M'$ and $q_f\in F$, it holds that $\mathrm{close}(v)\equiv n~\mathrm{mod}~P$.
\end{itemize}
These already prove Items $3$ and $4$ above. For $1$ and $2$, observe additionally that
\begin{itemize}
	\item whenever $(q_0,0,\uparrow)\mathop\leadsto\limits^u(q,n,\uparrow)$ for some state $(q,n,\uparrow)$
	of $M'$, then for each prefix $u'$ of $u$ we have $\mathrm{open}(u')<P$,
	\item and whenever $(q,n,\downarrow)\mathop\leadsto\limits^v(q_f,0,\downarrow)$ for some state
	$(q,n,\downarrow)$ of $M'$ and $q_f\in F$, then for each suffix $v'$ of $v$ we have $\mathrm{close}(v')<P$.
\end{itemize}
Indeed, for the first item to reach $(q,n,\uparrow)$ as there is no transition leading from a non-$\uparrow$
state to an $\uparrow$-state, we have to stay within the set of $\uparrow$-states during the whole run
reading $u$ starting from $(q_0,0,\uparrow)$. Then we can only use transitions of type i) and ii) and it
is easy to see by induction on $|u|$ that the first point holds, showing Item 1. A similar reasoning applies
to the second bullet point as once we are in a $\downarrow$-component, we cannot leave that, thus again
we can only use transitions of type i) and ii) starting from such a state, showing Item 3.

The parts ``moreover, at least such one such word exists'' parts are clear as during the construction
we explicitly remove all those states which are either not accessible or not coaccessible.

Now let us turn to Item 5 and let $(q,n,\equiv)$ be a state of $M'$ and let $N\geq P$ be an integer
with $N\equiv n~\mathrm{mod}~P$. Since by construction, we have the state $(q,n,\equiv)$ with $P\leq n<2P$
in $M'$ because $n\in\tau(q)$, and $n\geq P$, we get that $N\in N(q)$.
Thus, there exist words $u$ and $v$ with $uv\in D_1$ such that
$q_0\mathop\leadsto\limits^uq\mathop\leadsto\limits^vq_f$ for some $q_f\in F$ and
$\mathrm{open}(u)=\mathrm{close}(v)=N$.

Considering a run $r$ of $M$ on $uv$ which is in $q$ after reading in the $u$ prefix of the input,
let us see the canonical lifted run $r'$ of $r$ in $M'$. It has to be the case that in this run,
we have $(q_0,0,\uparrow)\mathop\leadsto\limits^u(q,m,\sigma)\mathop\leadsto\limits^v(q_f,0,\downarrow)$
for some $\sigma\in\{\uparrow,\downarrow,\equiv\}$ and value $m\equiv N~\mathrm{mod}~P$.
But as $\mathrm{open}(u)=N\geq P$, $\sigma$ cannot be $\uparrow$ due to Item 1;
as $\mathrm{close}(v)=N\geq P$, $\sigma$ cannot be $\downarrow$ due to Item 2; thus,
$\sigma=\equiv$ and hence $m=n$ (as that's the only possible value between $P$ and $2P-1$ inclusive
for which we have $m\equiv N~\mathrm{mod}~P$), proving Item 5.
\end{proof}	
Proposition~\ref{prop-props-of-states-of-mprime} has some interesting corollaries related to
having feasible cycles:
\begin{corollary}
\label{cor-cycles-of-m-prime}
The following all hold for the cycles present in $M'$:
\begin{enumerate}
	\item Whenever $(q,n,\sigma)\mathop\leadsto\limits^u(q,n,\sigma)$ is a cycle in $M'$ with $u\in\{0,1\}^+$,
	then any run in $M'$ corresponding the cycle either visits states only within a $\downarrow$-component,
	or within a $\uparrow$-component, or within a $\equiv$-component.
	\item Cycles within $\uparrow$- and $\downarrow$-components are always $0$-cycles.
	\item For any cycle $(q,n,\sigma)\mathop\leadsto\limits^w(q,n,\sigma)$ with $\sigma\in\{\uparrow,\downarrow\}$,
	and word $uv\in D_1$ with $q_0\mathop\leadsto\limits^uq\mathop\leadsto\limits^vq_f$ it holds that $uwv\in D_1$
	as well.
\end{enumerate}
\end{corollary}
\begin{proof}
	The first item is clear since a cycle can be present inside a component of $M'$ and we cannot reach an
	$\uparrow$-state from either a $\equiv$- or a $\downarrow$-state, nor an $\equiv$-state from a $\downarrow$-state
	so all the components of $M'$ are homogeneous with respect to $\sigma$.
	
	Assume $(q,n,\uparrow)\mathop\leadsto\limits^u(q,n,\uparrow)$ for some $q\in Q$ and $0\leq n<P$
	and let $x\in\{0,1\}^*$ be a word with $(q_0,0,\uparrow)\mathop\leadsto\limits^x(q,n,\uparrow)$.
	Then
	by Proposition~\ref{prop-props-of-states-of-mprime}, $\mathrm{open}(x)=n$. Now as
	$(q_0,0,\uparrow)\mathop\leadsto\limits^{xu}(q,n,\uparrow)$ also holds, we also know $\mathrm{open}(xu)=n$ as well,
	yielding $|u|_0=|u|_1$.
	
	For $\downarrow$-states having $0$-cycles only the proof is analogous.
	
	For the third point, if $(q,n,\sigma)\mathop\leadsto\limits^w(q,n,\sigma)$ with $\sigma\in\{\uparrow,\downarrow\}$,
	then in particular, during this cycle the run stays within the same $\sigma$-component all the time and the
	second coordinate always tracks the number of currently opened parentheses. Hence it cannot happen that
	for any prefix $w'$ of $w$ to have $\mathrm{close}(w')>n$ as then there would be an undefined transition
	from some state $(q,0,\sigma)$ with an input symbol $1$ during the run. Thus in that case, $uw\in\mathbf{Pref}(D_1)$ as well for any word $u$ with $(q_0,0,\uparrow)\mathop\leadsto^{u}(q,n,\sigma)$ (and at least one such word
	exists by Proposition~\ref{prop-props-of-states-of-mprime}) and from the second point, $\mathrm{open}(w)=0$
	so for any word $v$ with $(q,n,\sigma)\mathop\leadsto\limits^{v}(q_f,0,\downarrow)$ (and at least one
	such word exists) we get both $uv\in D_1$ and $uwv\in D_1$ as well.	
\end{proof}
Now we show the main result of this section:
\begin{proposition}
	The transducer $M'$ has feasible cycles only.
\end{proposition}
\begin{proof}
	For the first requirement of having feasible cycles only,
	let $(q,n,\sigma)$ be a state of $M'$ and $u\in\{0,1\}^+$ such that $(q,n,\sigma)\mathop\leadsto\limits^u(q,n,\sigma)$.
	\begin{itemize}
		\item If $\sigma\in\{\uparrow,\downarrow\}$, then Corollary~\ref{cor-cycles-of-m-prime} and Proposition~\ref{prop-props-of-states-of-mprime} show that the cycle can be extended into a run.
		\item If $\sigma=\equiv$, then let $N$ be the maximum value of $\mathrm{close}(u')$ ranging over
		the prefixes $u'$ of $u$. By Proposition~\ref{prop-props-of-states-of-mprime}, there exists some word
		$x\in\mathbf{Pref}(D_1)$ with $\mathrm{open}(x)\geq N+P$, $\mathrm{open}(x)\equiv n~\mathrm{mod}~P$
		and $(q_0,0,\uparrow)\mathop\leadsto\limits^x(q,n,\sigma)$. As $\mathrm{open}(x)$ is large enough,
		 $xu$ is then still in $\mathbf{Pref}(D_1)$ and $\mathrm{open}(xu)\geq P$, moreover, by
		 Proposition~\ref{prop-props-of-states-of-mprime}, $\mathrm{open}(xu)\equiv n~\mathrm{mod}~P$.
		 Again by Proposition~\ref{prop-props-of-states-of-mprime}, there exists then a word $v$ such 
		 that $v\in\mathbf{Suf}(D_1)$, $\mathrm{close}(v)=\mathrm{open}(xu)$ and $(q,n,\equiv)\mathop\leadsto\limits^{v}(q_f,0,\downarrow)$ showing the claim.
	\end{itemize}
	For the second requirement, observe that positive cycles can be present only in $\equiv$-components by
	Corollary~\ref{cor-cycles-of-m-prime}. Again, if $u$ is a positive cycle from the state
	$(q,n,\equiv)$, then we can construct a word $v$ with $(q_0,0,\uparrow)\mathop\leadsto\limits^{v}(q,n,\equiv)$ such that $\mathrm{open}(v)$ is large enough
	to make sure $vu$ is still in $\mathbf{Pref}(D_1)$. Then for each integer $t\geq 1$, the word $vu^t$
	still belongs to $\mathbf{Pref}(D_1)$ and by Proposition~\ref{prop-props-of-states-of-mprime}, to each such
	word there exists a suitable $w_t$ with $vu^tw_t\in D_1$ and $(q,n,\equiv)\mathop\leadsto\limits^{w_t}(q_f,0,\downarrow)$.
	
	For the last requirement, observe that the statement again requires for the state $(q,n,\sigma)$
	input words with arbitrary large $\mathrm{open}$ value to be completable from $(q,n,\sigma)$.
	By Proposition~\ref{prop-props-of-states-of-mprime}, this leaves only the possibility $\sigma=\equiv$.
	But repeating our previous argument, for any such cycle $u$ of $(q,n,\sigma)$ we indeed can pick
	some word $u_N\in\mathbf{Pref}(D_1)$ having a large enough opening value, leading to $(q,n,\sigma)$
	from the initial state, making sure that $u_Nu$ is still in $\mathbf{Pref}(D_1)$ and its opening is
	still at least $P$, hence there exists (again by Proposition~\ref{prop-props-of-states-of-mprime}) some
	suitable word $v$ with $u_Nuv\in D_1$, $(q,n,\sigma)\mathop\leadsto\limits^v(q_f,0,\downarrow)$.
\end{proof}

So we proved that each to and every transducer there exists an equivalent one which also satisfies the
feasible cycles property, which (along with Proposition~\ref{prop-ldelta-inductive} show that
for any scattered restricted one-counter language has a rank smaller than $\omega^2$, which,
applying the induction argument of Section 4, proves Theorem~\ref{thm-main}.

\section{Conclusion}

We confirmed the conjecture of~\cite{kuske} that scattered one-counter languages always have a rank
strictly smaller than $\omega^2$, thus in particular, well-ordered one-counter languages always
have an order type smaller than $\omega^{\omega^2}$. In the proof we used some upper bounds on the
rank -- it would be an interesting question to turn this into an algorithm which computes the exact
rank of the language. Also, since scattered order types lack a Cantor-like normal form, it is not
clear whether the order type of a scattered one-counter language is presentable by some expression
involving, say, $\omega$, $-\omega$, $1$, finite products, sums and powers and if so, whether such
a presentation is computable, or from the descriptive complexity point of view, whether representing
such an expression by a transducer can be more succint than storing the expression itself.
Also, it is still not known whether the order isomorphism problem of two scattered context-free
languages is decidable (for the general case of arbitrary context-free languages it is known to
be undecidable), and not even for one-counter languages. For the case of regular languages the
order isomorphism is known to be decidable, so to extend decidability the class of restricted
one-counter languages might be a good choice.

\section{Acknowledgements.}
This research was supported by project TKP2021-NVA-09. Project no. TKP2021-NVA-09 has been implemented with the support provided by the Ministry of Innovation and Technology of Hungary from the National Research, Development and Innovation Fund, financed under the TKP2021-NVA funding scheme.

The author wishes to thank two anonymous referees, sending valuable feedbacks to a much earlier version
of the manuscript, their inputs made it possible to improve the presentation of the result, very appreciated.

Also, thanks to Kitti Gelle for digitizing Figure~\ref{fig-mprime}.

%\textbf{Acknowledgements.}
%
%Ministry of Human Capacities, Hungary grant 20391-3/2018/FEKUSTRAT is acknowledged. Szabolcs Iv\'an was supported by the J\'anos Bolyai Scholarship of the Hungarian Academy of Sciences.

\bibliography{biblio}{}
\bibliographystyle{plain}
\end{document}